\definecolor{blue}{rgb}{0.274,0.392,0.666}
\definecolor{red}{rgb}{0.627,0.117,0.156}
\definecolor{green}{rgb}{0,0.588,0.509}
\newcommand{\blue}[1]{{\color{blue}{#1\xspace}}}
\newcommand{\remove}[1]{}
\DeclareMathOperator{\pert}{pert}
\DeclareMathOperator{\skel}{skel}
\let\doendproof\endproof
\renewcommand{\endproof}{\hfill$\qed$\doendproof}
\newcommand{\minmaxface}{{\sc MinMaxFace}\xspace}
\newcommand{\kface}{$k$-{\sc MinMaxFace}\xspace}
\newcommand{\regdual}{{\sc UniformFaces}\xspace}
\newcommand{\pthreesat}{{\sc Planar $3$-Sat}\xspace}
\newcommand{\true}{\texttt{true}\xspace}
\newcommand{\false}{\texttt{false}\xspace}
\newcommand{\opt}{\textsc{Opt}}
\newcommand{\eps}{\ensuremath{\varepsilon}}
\newenvironment{sketch}{\noindent \textit{Sketch of Proof.}}{\hfill$\qed$\medskip}
\title{Planar Embeddings with Small and Uniform Faces\thanks{ Work by
    Giordano {Da Lozzo} was supported in part by the Italian Ministry
    of Education, University, and Research (MIUR) under PRIN
    2012C4E3KT national research project ``AMANDA -- Algorithmics for
    MAssive and Networked DAta''.  Work by Jan Kratochv\'il and V\'it
    Jel\'inek was supported by the grant no. 14-14179S of the Czech
    Science Foundation GA\v CR. Ignaz Rutter was supported by a
    fellowship within the Postdoc-Program of the German Academic
    Exchange Service (DAAD).}}
\author{Giordano Da Lozzo\inst{1}, V\'it Jel\'inek\inst{2}, Jan Kratochv\'il\inst{3},
  and Ignaz Rutter\inst{3,4}}
\institute{Department of Engineering, Roma Tre University, Italy\\
    \email{dalozzo@dia.uniroma3.it}
\and
    Computer Science Institute, Charles University, Prague\\ \email{jelinek@iuuk.mff.cuni.cz}
\and
    Department of Applied Mathematics, Charles University, Prague\\ \email{honza@kam.mff.cuni.cz}
\and
    Faculty of Informatics, Karlsruhe Institute of Technology (KIT), Germany, \\
    \email{rutter@kit.edu}}
\begin{document}
\pagestyle{plain}

\maketitle

\begin{abstract}
  Motivated by finding planar embeddings that lead to drawings with
  favorable aesthetics, we study the problems \minmaxface and \regdual
  of embedding a given biconnected multi-graph such that the largest
  face is as small as possible and such that all faces have the same
  size, respectively.

  We prove a complexity dichotomy for \minmaxface and show that
  deciding whether the maximum is at most $k$ is polynomial-time
  solvable for $k \le 4$ and NP-complete for $k \ge 5$.  Further, we
  give a 6-approximation for minimizing the maximum face in a planar
  embedding.  For \regdual, we show that the problem is NP-complete for odd $k \ge 7$
  and even $k \ge 10$. Moreover, we characterize the biconnected planar
  multi-graphs admitting 3- and 4-uniform embeddings (in a
  $k$-uniform embedding all faces have size~$k$) and give an efficient
  algorithm for testing the existence of a 6-uniform embedding.
\end{abstract}

\section{Introduction}

While there are infinitely many ways to embed a connected planar graph
into the plane without edge crossings, these embeddings can be grouped
into a finite number of equivalence classes, so-called
\emph{combinatorial embeddings}, where two embeddings are
\emph{equivalent} if the clockwise order around each vertex is the
same.  Many algorithms for drawing planar graphs require that the
input graph is provided together with a combinatorial embedding, which
the algorithm preserves.  Since the aesthetic properties of the
drawing
often depend critically on the chosen embedding, e.g. the number of
bends in orthogonal drawings, it is natural to ask for a planar
embedding that will lead to the best results.

In many cases the problem of optimizing some cost function over all
combinatorial embeddings is NP-complete.  For example, it is known
that it is NP-complete to test the existence of an embedding that
admits an orthogonal drawing without bends or an upward planar
embedding~\cite{gt-ccurpt-01}.  On the other hand, there are efficient
algorithms for minimizing various measures such as the radius of the
dual~\cite{adp-fmep-11,bm-ccvp-88} and attempts to minimize the number
of bends in orthogonal drawings subject to some
restrictions~\cite{bkrw-odfc-14,brw-oodc-13,dlv-sood-98}.

Usually, choosing a planar embedding is considered as deciding the
circular ordering of edges around vertices.  It can, however, also be
equivalently viewed as choosing the set of facial cycles, i.e., which
cycles become face boundaries.  In this sense it is natural to seek an
embedding whose facial cycles have favorable properties.  For example,
Gutwenger and Mutzel~\cite{gm-emmeea-03} give algorithms for computing
an embedding that maximizes the size of the outer face.  The most
general form of this problem is as follows.  The input consists of a
graph and a cost function on the cycles of the graph, and we seek a
planar embedding where the sum of the costs of the facial cycles is
minimum.  This general version of the problem has been introduced and
studied by Mutzel and Weiskircher~\cite{mw-ocep-99}.
Woeginger~\cite{w-epgmnlc-02} shows that it is NP-complete even when
assigning cost~0 to all cycles of size up to $k$ and cost~1 for longer
cycles.  Mutzel and Weiskircher~\cite{mw-ocep-99} propose an ILP
formulation for this problem based on SPQR-trees.

In this paper, we focus on two specific problems of this type, aimed
at reducing the visual complexity and eliminating certain artifacts
related to face sizes from drawings.  Namely, large faces in the
interior of a drawing may be perceived as holes and consequently
interpreted as an artifact of the graph.  Similarly, if the graph has
faces of vastly different sizes, this may leave the impression that
the drawn graph is highly irregular.  However, rather than being a
structural property of the graph, it is quite possible that the
artifacts in the drawing rather stem from a poor embedding choice and
can be avoided by choosing a more suitable planar embedding.

We thus propose two problems.  First, to avoid large faces in the
drawing, we seek to minimize the size of the largest face; we call
this problem \minmaxface.  Second, we study the problem of recognizing
those graphs that admit perfectly uniform face sizes; we call this
problem \regdual.  Both problems can be solved by the ILP approach of
Mutzel and Weiskircher~\cite{mw-ocep-99} but were not known to be
NP-hard.


\paragraph{Our Contributions.}

First, in Section~\ref{sec:minmaxface}, we study the computational
complexity of \minmaxface and its decision version $k$-\minmaxface,
which asks whether the input graph can be embedded such that the
maximum face size is at most~$k$.  We prove a complexity dichotomy for
this problem and show that $k$-\minmaxface is polynomial-time solvable
for $k \le 4$ and NP-complete for $k \ge 5$.  Our hardness result for
$k \ge 5$ strengthens Woeginger's result~\cite{w-epgmnlc-02}, which
states that it is NP-complete to minimize the number of faces of size
greater than $k$ for $k \ge 4$, whereas our reduction shows that it is
in fact NP-complete to decide whether such faces can be completely
avoided.  Furthermore, we give an efficient 6-approximation for
\minmaxface.

Second, in Section~\ref{sec:regular-duals-small}, we study the problem
of recognizing graphs that admit perfectly uniform face sizes
(\regdual), which is a special case of $k$-\minmaxface.  An embedding
is \emph{$k$-uniform} if all faces have size~$k$.  We characterize the
biconnected multi-graphs admitting a $k$-uniform embedding for $k=3,4$
and give an efficient recognition algorithm for $k=6$.  Finally, we
show that for odd $k \ge 7$ and even $k \ge 10$, it is NP-complete to
decide whether a planar graph admits a $k$-uniform embedding.

\section{Preliminaries}
\label{sec:preliminaries}

A graph $G = (V,E)$ is \emph{connected} if there is a path between any
two vertices.  A \emph{cutvertex} is a vertex whose removal
disconnects the graph.  A separating pair $\{u,v\}$ is a pair of
vertices whose removal disconnects the graph.  A connected graph is
\emph{biconnected} if it does not have a cutvertex and a biconnected
graph is \emph{3-connected} if it does not have a separating pair.
Unless specified otherwise, throughout the rest of the paper we will consider
graphs without loops, but with possible multiple edges.

We consider $st$-graphs with two special \emph{pole} vertices $s$ and
$t$.  The family of $st$-graphs can be constructed in a fashion very
similar to series-parallel graphs.  Namely, an edge $st$ is an
$st$-graph with poles $s$ and $t$.  Now let $G_i$ be an $st$-graph
with poles $s_i,t_i$ for $i=1,\dots,k$ and let $H$ be a planar graph
with two designated adjacent vertices $s$ and $t$ and $k+1$ edges $st,
e_1,\dots,e_k$.  We call $H$ the \emph{skeleton} of the composition
and its edges are called \emph{virtual edges}; the edge $st$ is the
\emph{parent edge} and $s$ and $t$ are the poles of the skeleton $H$.
To compose the $G_i$'s into an $st$-graph with poles $s$ and $t$, we
remove the edge $st$ from $H$ and replace each $e_i$ by $G_i$ for
$i=1,\dots,k$ by removing $e_i$ and identifying the poles of $G_i$
with the endpoints of $e_i$.  In fact, we only allow three types of
compositions: in a \emph{series composition} the skeleton $H$ is a
cycle of length~$k+1$, in a parallel composition $H$ consists of two
vertices connected by $k+1$ parallel edges, and in a \emph{rigid
  composition} $H$ is 3-connected.

For every biconnected planar graph $G$ with an edge $st$, the graph
$G-st$ is an $st$-graph with poles $s$ and $t$~\cite{dt-ogasp-90}.
Much in the same way as series-parallel graphs, the $st$-graph $G-st$
gives rise to a (de-)composition tree~$\mathcal T$ describing how it
can be obtained from single edges.  The nodes of $\mathcal T$
corresponding to edges, series, parallel, and rigid compositions of
the graph are \emph{Q-, S-, P-,} and \emph{R-nodes}, respectively.  To
obtain a composition tree for~$G$, we add an additional root Q-node
representing the edge $st$.  We associate with each node~$\mu$ the
skeleton of the composition and denote it by $\skel(\mu)$.  For a
Q-node~$\mu$, the skeleton consists of the two endpoints of the edge
represented by~$\mu$ and one real and one virtual edge between them
representing the rest of the graph.  For a node~$\mu$ of $\mathcal T$,
the \emph{pertinent graph} $\pert(\mu)$ is the subgraph represented by
the subtree with root~$\mu$.  For a virtual edge $\eps$ of a
skeleton~$\skel(\mu)$, the \emph{expansion graph} of $\eps$ is the
pertinent graph $\pert(\mu')$ of the neighbor $\mu'$ corresponding to
$\eps$ when considering $\mathcal T$ rooted at $\mu$.

The \emph{SPQR-tree} of $G$ with respect to the edge $st$, originally
introduced by Di Battista and Tamassia~\cite{dt-ogasp-90}, is the
(unique) smallest decomposition tree~$\mathcal T$ for $G$.  Using a
different edge $s't'$ of $G$ and a composition of $G-s't'$ corresponds
to rerooting $\mathcal T$ at the node representing $s't'$.  It thus
makes sense to say that $\mathcal T$ is the SPQR-tree of $G$.  The
SPQR-tree of $G$ has size linear in $G$ and can be computed in linear
time~\cite{gm-lis-01}.  Planar embeddings of $G$ correspond
bijectively to planar embeddings of all skeletons of $\mathcal T$; the
choices are the orderings of the parallel edges in P-nodes and the
embeddings of the R-node skeletons, which are unique up to a flip.
When considering rooted SPQR-trees, we assume that the embedding of
$G$ is such that the root edge is incident to the outer face, which is
equivalent to the parent edge being incident to the outer face in each
skeleton.
We remark that in a planar embedding of $G$, the poles of any node
$\mu$ of $\mathcal{T}$ are incident to the outer face of
$pert(\mu)$. Hence, in the following we only consider embeddings of
the pertinent graphs with their poles lying on the same face.

\section{Minimizing the Maximum Face}
\label{sec:minmaxface}

In this section we present our results on \minmaxface.  We first
strengthen the result of Woeginger~\cite{w-epgmnlc-02} and show that
$k$-\minmaxface is NP-complete for $k \ge 5$ and then present
efficient algorithms for $k=3,4$.  In particular, the hardness result
also implies that the problem \minmaxface is NP-hard.  Finally, we
give an efficient $6$-approximation for \minmaxface on biconnected
graphs. Recall that we allow graphs to have multiple edges. 



\begin{theorem}
  \label{thm:5-maxface-hardness}
  $k$-\minmaxface is NP-complete for any $k \ge 5$.
\end{theorem}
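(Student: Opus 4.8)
The plan is to first observe that $k$-\minmaxface is in NP: given an embedding (encoded, say, as the rotation system, or equivalently as the choices at the P- and R-nodes of the SPQR-tree), one computes all facial walks in linear time and checks that each has length at most $k$.

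**Reduction strategy.** For hardness I would reduce from \pthreesat (planar 3-SAT), which remains NP-complete even in the version where the variable-clause incidence graph is planar and comes with a planar embedding in which all variables lie on a line with clauses attached above and below. The idea is to build, for a given formula $\varphi$, a biconnected planar multi-graph $G_\varphi$ together with the threshold $k$ so that $G_\varphi$ admits an embedding with all faces of size at most $k$ if and only if $\varphi$ is satisfiable. The key mechanism is to exploit the embedding freedom at P-nodes (and possibly R-nodes): a gadget whose SPQR-tree has a P-node with several virtual edges can be embedded in different ways, and the way the parallel edges are ordered determines which larger cycles are forced to become faces. A \emph{variable gadget} should have exactly two ``good'' embeddings, corresponding to the truth values \true and \false, every other embedding creating a face of size larger than $k$; a \emph{clause gadget} should be embeddable with all faces of size at most $k$ precisely when at least one of its three incident literal edges is routed in the ``satisfying'' way, and should force a face of size $k+1$ or more otherwise. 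The literals are transmitted from variable gadgets to clause gadgets along paths/chains of subdivided edges whose embedding is rigidly coupled to the choice made in the variable gadget.

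**Constructing the gadgets.** Concretely, I would first design the reduction for $k=5$ and then argue it extends to all $k \ge 5$. For $k = 5$, a natural building block is a small graph — e.g.\ two vertices joined by three internally disjoint paths — so that the three pairwise unions of paths form cycles; balancing path lengths so that two of the three possible faces have size exactly $5$ but the third (the ``long'' cycle, i.e.\ the union of the two longest paths) has size $6$. Then, in any planar embedding, the middle path must be placed so that the long cycle is \emph{not} a face; this forces a binary choice (which of the two endpoints' sides the middle path hugs), giving the truth value. Clause gadgets are assembled from such blocks so that the three literal inputs compete for ``the good side'' and all three being false forces a $6$-face. To extend to general $k \ge 5$, I would subdivide edges of the gadgets (adding degree-$2$ vertices) to inflate all the ``short'' faces up to size exactly $k$ while keeping the ``bad'' face at size $k+1$; since subdivisions do not change the SPQR-tree structure beyond inserting S-nodes, the embedding choices — hence the reduction's correctness — are preserved. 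Finally I would verify biconnectivity of $G_\varphi$ (connecting the gadgets via a surrounding cycle so that the whole construction is biconnected) and that its size is polynomial in $|\varphi|$.

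**Main obstacle.** The hard part will be the \emph{soundness} direction and the precise face-size bookkeeping: one must show that \emph{every} embedding of $G_\varphi$ with max face $\le k$ induces a consistent truth assignment, i.e.\ that there are no ``cheating'' embeddings that partially flip a variable gadget, or that route a literal chain inconsistently, yet still avoid a large face. This requires a careful case analysis of the possible embeddings of each gadget — arguing via the SPQR-tree that the only embeddings avoiding an oversized face are the two intended ones for a variable gadget, and enumerating how the P-node orderings at the clause gadget interact. Getting the constants exactly right (so that the threshold is $k$ and not $k+1$ or $k-1$) and ensuring the gadgets compose without unintended large faces at their interfaces is where the bulk of the technical work lies.
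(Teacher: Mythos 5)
Your high-level plan (reduce from a bounded-occurrence variant of planar 3-SAT, encode truth values in the flip of a parallel-path structure, assemble variable and clause gadgets, then rigidify by triangulating and subdivide to reach general $k$) is in the same spirit as the paper's. However, the one concrete gadget you propose does not work, and the parts you defer are precisely where the paper's proof does the real work.

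Concretely, consider your theta-graph building block: two poles joined by three internally disjoint paths of lengths $a\le b\le c$ with $a+b=a+c=5$ and $b+c=6$, i.e.\ lengths $2,3,3$. In any planar embedding of the theta graph inside a larger graph, the middle path is shared by the two internal faces of the gadget, and the outer boundary of the gadget is the cycle formed by the two \emph{non-middle} paths. To avoid a size-$6$ \emph{internal} face you must place the length-$2$ path in the middle; but then the gadget's outer boundary is the $3{+}3$ cycle, so whichever face of the host graph the gadget sits in already has at least $6$ edges on its boundary from the gadget alone, and you have created the oversized face anyway, just on the outside. Moreover, with the length-$2$ path forced into the middle, the two remaining paths are identical in length, so the ``binary choice'' you invoke has no effect on any face size, and the gadget carries no information. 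The paper avoids this by using a $(1,d)$-edge, which is only \emph{two} parallel paths (of lengths $1$ and $d\in\{2,3\}$) replacing an ordinary edge of a cycle. The cycle formed by the two paths bounds one face of fixed size $1+d\le k$, and the embedding flip decides which of the two neighbouring faces of the original edge receives the length-$1$ path and which receives the length-$d$ path. This yields a genuine binary choice whose two outcomes are distinguishable by the surrounding faces, which is what the rest of the reduction hinges on.

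Beyond the gadget itself, the variable gadget (two cycles sharing a $(1,3)$-edge, with $(1,2)$-edges as literal ports, coupled so that a flip of the shared variable edge forces all positive or all negative literal ports into the ``false'' state), the clause gadget (a short cycle of $(1,2)$-edges whose inner face exceeds $5$ iff all incident literals transmit \texttt{false}), the choice of the bounded-occurrence 3-SAT variant with the ``two positive, one negative'' normalization, and the soundness case analysis are all absent from your proposal. You correctly identify these as the crux, but identifying the difficulty is not the same as resolving it; the reduction stands or falls on exactly those details, and the paper's proof is essentially nothing but those details. Your suggestion for extending from $k=5$ to general $k$ by lengthening cycles does match the paper.
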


\remove{
\begin{sketch}
\blue{
  Clearly, the problem is in NP.  We sketch hardness for $k=5$.  Our
  reduction is from \pthreesat where every variable occurs at most
  three times, which is NP-complete~\cite[Lemma 2.1]{fkmp-cimrp-95}.
  Note that we can assume without loss of generality that for each
  variable both literals appear in the formula and, by replacing some
  variables by their negations, we can assume that a variable with
  three occurrences occurs twice as a positive literal.  Let $\varphi$
  be such a formula.

  We construct gadgets where some of the edges are in fact two
  parallel paths, one consisting of a single edge and one of length~2
  or~3.  The ordering of these paths then decides which of the faces
  incident to the gadget edge is incident to a path of length~1 and
  which is incident to a path of length~2 or 3; see
  Fig.~\ref{fig:maxface-hardness}a.  Due to this use, we also call
  these gadgets $(1,2)$- and $(1,3)$-edges, respectively.

  \begin{figure}[tb]
    \centering
    \includegraphics{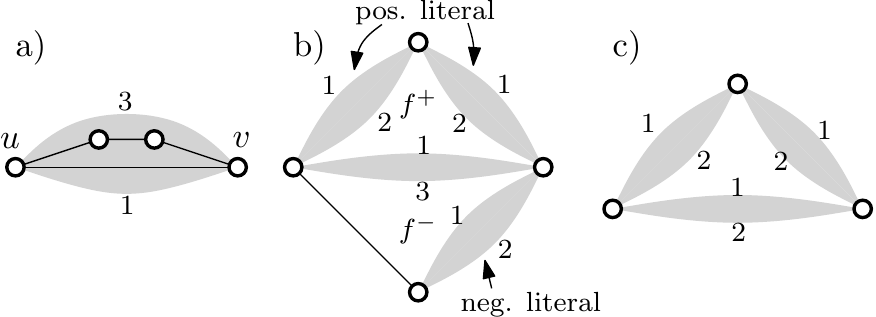}
    \caption{Illustration of the gadgets for the proof of
      Theorem~\ref{thm:5-maxface-hardness}.  (a) A $(1,3)$-edge. (b)~A
      variable gadget for a variable that occurs twice as a positive
      literal and once as a negative literal.  Changing the flip of
      the $(1,3)$-edge in the middle (variable edge) forces flipping
      the upper two literal edges.  (c) A clause gadget for a clause
      of size~3.}
    \label{fig:maxface-hardness}
  \end{figure}

  The gadget for a variable with three occurrences is shown in
  Fig.~\ref{fig:maxface-hardness}b.  The central $(1,3)$-edge
  (\emph{variable edge}) decides the truth value of the variable.
  Depending on its flip either the positive literal edges or the
  negative literal edge must be embedded such that they have a path of
  length~2 in the outer face, which corresponds to a literal with
  truth value \false.  Figure~\ref{fig:maxface-hardness}c shows a
  clause gadget with three incident literal variables.  Its inner face
  has size at most~5 if not all incident $(1,2)$-edges transmit value
  \false.  Clauses of size~2 and variables occurring only twice work
  similarly.

  We now construct a graph~$G_\varphi$ by replacing in the plane
  variable--clause graph of $\varphi$ each variable and each clause by
  a corresponding gadget and identifying $(1,2)$-edges that represent
  the same variable, taking into account the embedding of the
  variable--clause graph.  Finally, we arbitrarily triangulate all
  faces that are not inner faces of gadgets.  Then the only embedding
  choices are the flips of the $(1,2)$- and $(1,3)$-edges.  We claim
  that $\varphi$ is satisfiable if and only if $G_\varphi$ has a
  planar embedding where every face has size at most~5.
}
\end{sketch}
}

\begin{proof}
  Clearly, the problem is in NP, since we can simply guess a planar
  embedding and verify in polynomial time that all faces have size at
  most~$k$.

  We show hardness for $k=5$ and in the end briefly sketch how to
  adapt the proof for $k>5$.  We give a reduction from \pthreesat with
  the additional assumption that each variable occurs three times and
  each clause has size two or three.  Further, we can assume that if a
  variable occurs three times, then it appears twice as a positive
  literal and once as a negative literal.  This variant is
  NP-complete~\cite[Lemma 2.1]{fkmp-cimrp-95}.

  We construct gadgets where some of the edges are in fact two
  parallel paths, one consisting of a single edge and one of length~2
  or~3.  The ordering of these paths then decides which of the faces
  incident to the gadget edge is incident to a path of length~1 and
  which is incident to a path of length~2 or 3; see
  Fig.~\ref{fig:maxface-hardness}a.  Due to this use, we also call
  these gadgets $(1,2)$- and $(1,3)$-edges, respectively.

  \begin{figure}[tb]
    \centering
    \includegraphics{fig/5-maxface-hardness}
    \caption{Illustration of the gadgets for the proof of
      Theorem~\ref{thm:5-maxface-hardness}.  (a) A $(1,3)$-edge. (b)~A
      variable gadget for a variable that occurs twice as a positive
      literal and once as a negative literal.  Changing the flip of
      the $(1,3)$-edge in the middle (variable edge) forces flipping
      the upper two literal edges.  (c) A clause gadget for a clause
      of size~3.}
    \label{fig:maxface-hardness}
  \end{figure}
  
  Now consider a variable $x$ whose positive literals occur $d^+$
  times.  Note that the negative literal hence occurs $3 - d^+$ times.
  We represent $x$ by a \emph{variable gadget} consisting of two
  cycles $C^+$ and $C^-$ of lengths~$5-d^+$ and $5 - (3 - d^+) = 2 +
  d^+$, respectively, sharing one edge.  The shared edge is actually a
  $(1,3)$-edge, called \emph{variable edge}, and in $C^+$ (in $C^-$),
  we replace $d^+$ of its edges ($3-d^+$ of its edges) by
  $(1,2)$-edges, called positive (negative) \emph{literal edges},
  respectively; see Fig.~\ref{fig:maxface-hardness}b.  We denote the
  faces bounded solely by $C^+$ and $C^-$ by $f^+$ and $f^-$,
  respectively.  Without loss of generality, we assume that the gadget
  is embedded so that $f^+$ and $f^-$ are inner faces, and we denote
  the outer face by $f_0$.  The gadget represents truth values as
  follows.  A literal edge represents the truth value \true if and
  only if its path of length~1 is incident to the outer face.  A
  variable edge represents value \true if and only if its path of
  length~1 is incident to $f^+$.  If the variable edge represents
  value true, then $f^-$ is incident to a path of length~3 of the
  variable edge.  Hence, all negative literal edges must transmit
  value \false.  A symmetric argument shows that if the variable edge
  encodes value \false, then all positive literal edges must transmit
  value \false.  On the other hand, given a truth value for variable
  $x$, choosing the flips of the variable edge and all literal edges
  accordingly yields an embedding where each inner face has size at
  most $5$.

  A clause gadget for a clause of size~3 consists of a cycle of three
  $(1,2)$-edges that correspond bijectively to the literals occurring
  in it; see Fig.~\ref{fig:maxface-hardness}c.  Similarly, a clause of
  size~2 consists of a cycle on four edges, two of which are
  $(1,2)$-edges corresponding to the literals.  The encoding is such
  that a literal edge has its path of length~2 incident to the inner
  face of the clause gadget if and only if such a literal has value
  \false.  Clearly, the inner face has size at most~5 if and only if
  at most two literals transmit value \false, otherwise the size is~6.
  Thus, the inner face of the clause gadget has size at most~5 if and
  only if at least one the literals transmit value~\true.

  We now construct a graph~$G_\varphi$ as follows.  We create for each
  variable a corresponding variable gadget and for each clause a
  corresponding clause gadget.  We then identify literal edges of
  variables and clauses that correspond to the same literal.  By
  adhering to the planar embedding of the variable--clause graph
  of~$\varphi$, the resulting graph $G_\varphi$ is planar and can be
  embedded such that all inner faces of the gadgets are faces of the
  graph.  Denote this plane graph by $H_\varphi$.  To obtain
  $G_\varphi$, we arbitrarily triangulate all faces of $H_\varphi$
  that are not internal faces of a gadget.  Then, the only embedding
  choices of $G_\varphi$ are the flips of the $(1,2)$- and
  $(1,3)$-edges.  We claim that $G_\varphi$ admits an embedding where
  every face has size at most~5 if and only if $\varphi$ is
  satisfiable.

  If $G$ is satisfiable, pick a satisfying truth assignment.  We flip
  each variable edge and each literal edge to encode its truth value
  in the assignment.  As argued above, every inner face of a variable
  now has size at most~5, and, since each clause contains at least one
  satisfied literal, also the inner faces of the clause gadgets have
  size at most~5.  Conversely, given a planar embedding of $G_\varphi$
  where each face has size at most~5, we construct a satisfying truth
  assignment for $\varphi$ by assigning a variable the truth value
  encoded by the variable edge in the corresponding gadget.  Due to
  the above properties, it follows that all edges corresponding to a
  negative literal must contribute a path of length~2 to each clause
  gadget containing such a literal.  However, each inner face of a
  clause gadget has only size~5, and hence at least one of the literal
  edges must contribute a path of only length~1, i.e., the clause
  contains a satisfied literal.  Since the construction of $G_\varphi$
  can clearly be done in polynomial time, this finishes the proof for
  $k=5$.

  For $k > 5$, it suffices to lengthen all cycles of the construction
  by $k-5$ edges.  All arguments naturally carry over.
\end{proof}

\subsection{Polynomial-Time Algorithm for Small Faces}
\label{sec:polyn-time-algor}

Next, we show that \kface is polynomial-time solvable for $k=3,4$.
Note that, if the input graph is simple, the problem for $k=3$ is
solvable if and only if the input graph is maximal planar.  A bit more
work is necessary if we allow parallel edges.

Let $G$ be a biconnected planar graph.  We devise a dynamic program on
the SPQR-tree $\mathcal T$ of $G$.  Let $\mathcal T$ be rooted at an
arbitrary Q-node and let $\mu$ be a node of $\mathcal T$.  
We call the clockwise and counterclockwise paths connecting the poles
of $\mu$ along the outer face the \emph{boundary paths} of
$\pert(\mu)$.  We say that an embedding of $\pert(\mu)$ has type
$(a,b)$ if and only if all its
inner faces have size at most~$k$ and its boundary paths have
length~$a$ and $b$, respectively.  Such an embedding is also called an
$(a,b)$-embedding.  We assume that $a \le b$.

Clearly, each of the two boundary paths of $\pert(\mu)$ in an embedding
$\mathcal E_\mu$ of type $(a,b)$ will be a proper subpath of the
boundary of a face in any embedding of $G$ where the embedding of
$\pert(\mu)$ is $\mathcal E_\mu$.  Hence, when seeking an embedding
where all faces have size at most~$k$, we are only interested in the
embedding $\mathcal E_\mu$ if $1 \le a \le b \le k-1$.  We define a
partial order on the embedding types by $(a',b') \preceq (a,b)$ if and
only if $a' \le a$ and $b' \le b$.  Replacing an
$(a,b)$-embedding~$\mathcal E_\mu$ of $\pert(\mu)$ by (a reflection
of) an $(a',b')$-embedding~$\mathcal E_\mu'$ with $(a',b') \preceq
(a,b)$ does not create faces of size more than $k$; all inner faces of
$\mathcal E_\mu'$ have size at most $k$ by assumption, and the only
other faces affected are the two faces incident to the two boundary
paths of $\mathcal E_\mu'$, whose length does not increase.  We thus
seek to compute for each node $\mu$ the minimal pairs $(a,b)$ for
which it admits an $(a,b)$-embedding.  We remark that $\pert(\mu)$ can
admit an embedding of type $(1,b)$ for any value of $b$ only if $\mu$
is either a P-node or a Q-node.  


\newcommand{\thmthreeminmaxtext}{3-\minmaxface can be solved in
  linear time for biconnected graphs.}


 We now present the algorithm for $k=3$, which works even if we allow
  parallel edges.

\begin{theorem}
  \label{thm:three-minmax}
  \thmthreeminmaxtext
\end{theorem}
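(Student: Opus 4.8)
The plan is to perform a bottom-up dynamic program on the SPQR-tree $\mathcal T$ of $G$, rooted at an arbitrary Q-node, computing for each node $\mu$ the set of minimal (with respect to $\preceq$) embedding types $(a,b)$ with $1\le a\le b\le k-1 = 2$ for which $\pert(\mu)$ admits an $(a,b)$-embedding all of whose inner faces are triangles. Since $b\le 2$, the only candidate types are $(1,1)$, $(1,2)$, and $(2,2)$, so each node stores a constant-size antichain of achievable types. The key observation is that a $(1,1)$-embedding means both boundary paths are single edges, which (for the pertinent graph of a node that is not a P- or Q-node) would force a multi-edge between the poles at the skeleton level; I would record for each node which of these three types it admits.

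The core of the argument is the case analysis showing how to combine children's types at each skeleton:

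\begin{itemize}
\item \emph{Q-node (leaf):} $\pert(\mu)$ is a single edge, which is a $(1,0)$-embedding; treating the trivial path of length $0$ appropriately, the edge can serve as a boundary path of length $1$ on each side, so effectively it contributes type "$(1,\cdot)$" to its parent. (The root Q-node is handled separately at the end.)
\item \emph{S-node:} the skeleton is a cycle; the pertinent graph is obtained by concatenating the pertinent graphs of the children in series along a path between the poles. One boundary path of $\pert(\mu)$ is the concatenation of one boundary path from each child, and similarly for the other side; additionally, every ``pocket'' between two consecutive children must be closed off into triangles, which is only possible if the relevant boundary paths are short enough. I would verify that the resulting boundary lengths $a$ and $b$ are forced to be the sums of the per-child choices, and that internal triangulability imposes local constraints (essentially each child other than at the ends must present a boundary of length $1$ on the inner side, so only very restricted series compositions are feasible for $k=3$).
\item \emph{P-node:} the skeleton is a bundle of parallel virtual edges between the poles; an embedding orders the children's pertinent graphs, and each pair of consecutive children forms a face whose size is the sum of the two facing boundary paths. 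For all these faces to be triangles, consecutive children must offer facing boundary paths summing to $3$, i.e. lengths $1$ and $2$; the two outermost children contribute their outer boundary paths to $\pert(\mu)$'s boundary. This is essentially a matching/ordering feasibility check on a constant number of type-labels, solvable in time linear in the number of children.
\item \emph{R-node:} the skeleton is $3$-connected, so its embedding is fixed up to a flip; each virtual edge is replaced by (an embedding of) the child's pertinent graph, and each face of the skeleton, after substitution, becomes a face whose size is the sum of the contributions of the virtual edges on its boundary. For every skeleton face to become a triangle, the multiset of contributed lengths around it must sum to $3$; I check both flips.
\end{itemize}

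Propagating these constant-size type sets bottom-up takes time proportional to the total skeleton size, hence linear in $|G|$; at the root Q-node we check whether the unique child admits a type $(a,b)$ with $a,b\le 2$ such that closing the root edge around it yields only triangles (i.e. the outer face also has size $3$). Correctness follows from the monotonicity already established in the preamble: replacing a child's embedding by a $\preceq$-smaller one never enlarges any face, so keeping only the minimal types loses no feasible global embedding.

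I expect the main obstacle to be the S-node and R-node cases: one must argue carefully that the ``pockets'' or skeleton faces can actually be triangulated using only the material already present (no new vertices may be added), which sharply limits which combinations of child types are admissible, and one must make sure the boundary-path lengths of the composed graph are computed correctly taking the chosen flips and the cyclic structure of the skeleton into account. A secondary subtlety is the correct treatment of the degenerate length-$0$ / length-$1$ boundary paths of Q-nodes and the interaction with multiple edges, since parallel edges are explicitly allowed and a $P$-node with a Q-node child is exactly where length-$1$ boundary paths on both sides can arise; the remark in the preamble that only P- and Q-nodes admit type $(1,b)$ for arbitrary $b$ is the formal hook for this. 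Once the case analysis is pinned down, the linear running time is immediate because every skeleton is processed once and each processing step is linear in that skeleton's size.
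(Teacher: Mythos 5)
Your approach is the same as the paper's: a bottom-up dynamic program on the SPQR-tree that records, for each node, which boundary types $(a,b)$ with $1\le a\le b\le 2$ the pertinent graph can achieve while keeping all inner faces of size at most~$3$; the paper additionally exploits that for $k=3$ the three types are totally ordered by $\preceq$, so it stores a single minimal type rather than an antichain, but this is a cosmetic difference.

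Two points in your case analysis are off and would need repair before this is a proof. First, a Q-node's pertinent graph is a single edge, and both boundary paths have length~$1$; the type is $(1,1)$, not ``$(1,0)$''. This matters downstream because the P-node analysis hinges on Q-node children contributing a path of length exactly~$1$ on each side. Second, your S-node discussion is confused: in a series composition the children are concatenated end-to-end along a chain, so there are no ``pockets'' between consecutive children to close off into triangles, and no notion of an inner side for the middle children. The only inner faces of an S-node's pertinent graph are inner faces of the children. The correct observation (which the paper states directly) is that the boundary paths of $\pert(\mu)$ are the concatenations of the children's boundary paths, so for both to have length at most~$2$ the skeleton cycle must have size exactly~$3$ (hence exactly two children) and both children must admit $(1,1)$-embeddings. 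Your P-node condition ``facing boundary paths must sum to $3$'' is also slightly too strict --- two adjacent $(1,1)$-children give a digon of size~$2$, which is fine since parallel edges are allowed --- and at the root you need the two faces incident to the root edge to have size \emph{at most}~$3$, not exactly~$3$. None of these is fatal to the strategy, but the S-node and Q-node slips are conceptual rather than typographical and must be fixed.
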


\begin{proof}
  Clearly, the only interesting types of embeddings
  are $(1,1), (1,2)$ and $(2,2)$ and $\preceq$ defines a total
  ordering on them.  We thus seek to determine for each pertinent
  graph bottom-up in the SPQR-tree the smallest type (with respect to
  $\preceq$) of a valid planar embedding.  For Q-nodes this is
  $(1,1)$.  Now consider an R-node or S-node $\mu$.  By the above
  remark its only possible type of embedding can be $(2,2)$.  Since
  every face is bounded by at least three edges, it is not hard to see
  that $\pert(\mu)$ admits a $(2,2)$-embedding if and only if every
  face of $\skel(\mu)$ has size~3 and all children admit
  $(1,1)$-embeddings.

  For a P-node, we observe that none of its children can have a
  $(1,2)$-embedding, as no two P-node can be adjacent.  Thus, all
  children admit either a $(1,1)$-embedding, then they are Q-nodes, or
  they admit a $(2,2)$-embedding.  We denote the virtual edges in
  $\skel(\mu)$ by $(1,1)$-edges and $(2,2)$-edges, respectively,
  according to the type of embedding the corresponding graph admits.
  To obtain an embedding where all faces have size at most~3, we have
  to choose the embedding of $\skel(\mu)$ in such a way that every
  $(2,2)$-edge is adjacent to either two $(1,1)$-edges or to a
  $(1,1)$-edge and the parent edge.  Let $a$ and $b$ denote the number
  of $(1,1)$-edges and $(2,2)$-edges in $\skel(\mu)$, respectively.
  Clearly, an ordering satisfying these requirements exists if and
  only if $a \ge b-1$; otherwise we necessarily have two adjacent
  $(2,2)$-edges.  To find a good sequence, we proceed as follows.  If
  $a = b-1$, the sequence must alernatingly consist of $(2,2)$-edges
  and $(1,1)$-edges, starting with a $(1,1)$-edge.  The type of the
  resulting embedding is $(2,2)$ and one cannot do better.  If $a=b$,
  we do the same, but the type of the resulting embedding is $(1,2)$;
  again one cannot do better.  Finally, if $a \ge b+1$, we again do
  the same, and finally append the remaining $(1,1)$-edges.  Then the
  resulting embedding has type $(1,1)$.

  Clearly, we can process each node $\mu$ in time proportional to the
  size of its skeleton.  The graph admits an embedding if and only if
  the pertinent graph of the child of the root admits some valid
  embedding.
\end{proof}

We now deal with the case $k=4$, which is similar but more
complicated.  The relevant types are $(1,1), (1,2), (1,3), (2,2),
(2,3)$, and $(3,3)$.  We note that precisely the two elements $(2,2)$
and $(1,3)$ are incomparable with respect to $\preceq$.  Thus, it
seems that, rather than computing only the single smallest type for
which each pertinent graph admits an embedding, we are now forced to
find all minimum pairs for which the pertinent graph admits a
corresponding embedding.  However, by the above observation,
if a pertinent graph $\pert(\mu)$ admits a $(1,3)$-embedding, then $\mu$ must
be a P-node.  However, if the parent of $\mu$ is an S-node or an
R-node, then using a $(1,3)$-embedding results in a face of size at
least~5.  Thus, such an embedding can only be used if the parent is
the root Q-node.  If there is the choice of a $(2,2)$-embedding in
this case, it can of course also be used at the root.  Therefore, we
can mostly ignore the $(1,3)$-case and consider the linearly ordered
embedding types $(1,1), (1,2), (2,2), (2,3)$ and $(3,3)$.  The type
$(1,3)$ is only relevant for P-nodes whose pertinent graph admits an
embedding of type $(1,3)$ embedding but no embedding of type $(2,2)$.

\remove{The running
time stems from the fact that, for an R-node, we need to find a
matching between the virtual edges whose expansion graphs admit a
$(1,2)$-embedding and the incident triangular faces of the skeleton.
}



\begin{theorem}
  \label{thm:4-minmaxface}
  4-\minmaxface can be solved in $O(n^{1.5})$ time for biconnected
  graphs.
\end{theorem}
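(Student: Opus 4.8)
We adapt the bottom-up dynamic program over the SPQR-tree $\mathcal T$ used in the proof of Theorem~\ref{thm:three-minmax}. Root $\mathcal T$ at a Q-node and, for each node $\mu$, compute the $\preceq$-minimal type $(a,b)$ with $a\le b\le 3$ of a \emph{valid} embedding of $\pert(\mu)$, that is, one all of whose inner faces have size at most~$4$. By the discussion preceding the theorem it suffices to store, besides the $\preceq$-minimal type among the linearly ordered ones $(1,1)\prec(1,2)\prec(2,2)\prec(2,3)\prec(3,3)$, a single bit recording whether $\pert(\mu)$ also admits a $(1,3)$-embedding; this can happen only at a P-node, and is used only when that P-node is the child of the root. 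The graph $G$ then has an embedding with all faces of size at most~$4$ if and only if the child $\nu$ of the root admits a valid embedding, since inserting the root edge $st$ into the outer face of a type-$(a,b)$ embedding of $\pert(\nu)$ splits that face into two faces of sizes $a+1$ and $b+1$, both at most~$4$. A Q-node has type $(1,1)$.

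For an \textbf{S-node} $\mu$ the skeleton is a cycle and $\pert(\mu)$ is obtained by identifying the expansion graphs of the children along a path: no new inner face arises, so the inner faces of $\pert(\mu)$ are exactly those of the children, and the two boundary paths are concatenations of one independently flippable boundary path per child. Since each child contributes at least~$1$ to each side, validity forces at most three children, and a constant-size case distinction gives the optimal type (always one of $(2,2),(2,3),(3,3)$) in time linear in the skeleton. For a \textbf{P-node} $\mu$ the skeleton is a bundle of parallel virtual edges; since no two P-nodes are adjacent, every child is a Q-node (type $(1,1)$) or an S- or R-node (type $(2,2),(2,3)$, or $(3,3)$), so in particular no child has type $(1,2)$ or $(1,3)$. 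An embedding corresponds to an ordering of the children together with a flip of each; it creates one inner face between each pair of consecutive children whose size is the sum of the two facing boundary-path lengths, and the type of $\pert(\mu)$ is read off the two outermost facing lengths. Analysing which child types may be placed next to one another (a $(3,3)$-child is compatible only with a $(1,1)$-child; $(2,2)$- and $(3,3)$-children may not be adjacent; $(2,3)$-children form runs of length at most two) yields a characterisation of feasibility and of the $\preceq$-minimal achievable type (and of whether a $(1,3)$-embedding exists) depending only on the multiset of child types, realisable by a greedy ordering in linear time.

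The interesting case is the \textbf{R-node} $\mu$. As $\skel(\mu)$ is $3$-connected, its planar embedding, hence its face set, is fixed up to reflection; let $f_1,f_2$ be the two faces incident to the parent edge. After substituting the expansion graphs, each face of $\skel(\mu)$ receives from each of its virtual edges the length of one boundary path of the corresponding pertinent graph; this contribution is fixed for real edges and for children of type $(1,1),(2,2)$, or $(3,3)$, while a child of type $(1,2)$ or $(2,3)$ is a \emph{flexible} edge: one of its two incident faces gets the larger value, the other the smaller. Let $\beta(f)$ be the sum over the edges of $f$ of the smaller of the two possible contributions; then a face $f\notin\{f_1,f_2\}$ has size at most~$4$ if and only if at most $\mathrm{cap}(f):=4-\beta(f)$ of its flexible edges are charged to $f$ with their larger value, and $\mathrm{cap}(f)\le 1$ because every face of $\skel(\mu)$ has at least three edges (if $\mathrm{cap}(f)<0$ for some such $f$, or a face has size at least~$5$, the R-node is infeasible). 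For each of the $O(1)$ ways of fixing the flexible edges incident to $f_1$ or $f_2$ — each realising a candidate type $(a,b)$ with $a,b\in\{2,3\}$ — deciding whether the remaining flexible edges can be charged without exceeding the residual capacities is exactly the problem of finding a matching saturating those flexible edges in the bipartite graph linking them to the faces $f\notin\{f_1,f_2\}$ of positive residual capacity; we solve it with the Hopcroft--Karp algorithm in $O(E_\mu\sqrt{V_\mu})$ time, where $V_\mu$ and $E_\mu$ are the numbers of vertices and edges of $\skel(\mu)$, and keep the $\preceq$-minimal feasible candidate type.

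Summing over all nodes, the non-R-nodes cost $O(n)$ in total, while the R-nodes cost $\sum_\mu O(E_\mu\sqrt{V_\mu})\le O(\sqrt n)\sum_\mu E_\mu = O(n^{1.5})$, since the skeletons of $\mathcal T$ have total size $O(n)$; this yields the claimed bound. The principal obstacle is the R-node case: setting up the reduction to bipartite matching correctly — in particular the role of the two faces $f_1,f_2$ (which constrain the produced type rather than feasibility) and of $(2,3)$-children (base contribution~$2$ rather than~$1$) — and proving that $\pert(\mu)$ admits a size-$\le 4$ embedding of a given type precisely when the associated matching exists. A secondary, purely technical burden is the exhaustive but routine case analysis behind the P-node step.
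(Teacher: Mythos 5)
Your proposal is correct and takes essentially the same approach as the paper: bottom-up dynamic programming over the SPQR-tree, tracking the $\preceq$-minimal achievable type $(a,b)$ (with the $(1,3)$ case stored as a separate bit for P-nodes), a direct case analysis for S- and P-nodes, and a bipartite matching at R-nodes between flexible $(1,2)$-edges and incident triangular faces that can absorb a path of length~2, giving the $O(n^{1.5})$ bound. The only divergence is that you also allow $(2,3)$-children as ``flexible'' in R-nodes; the paper observes up front that any child of an S- or R-node admitting only $(1,3)$-, $(2,3)$- or $(3,3)$-embeddings is immediately fatal (since every skeleton face has at least three edges), so the matching need only involve $(1,2)$-edges with unit capacities. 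Your capacity mechanism would still correctly reject such children, so this is harmless extra generality rather than an error, but noting the paper's stronger preprocessing observation slightly simplifies the R-node argument.
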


\begin{proof}
  We process the SPQR-tree of the input graph in a bottom-up fashion.
  The pertinent graphs of Q-nodes admit embeddings of type $(1,1)$.

  Now consider an S- or an R-node $\mu$.  All faces of $\skel(\mu)$
  must have size at most~4.  Moreover, since all faces have length at
  least~3, a valid embedding of $\pert(\mu)$ does not exist if some
  child only allows embeddings of type $(1,3)$, $(2,3)$ or $(3,3)$.
  Thus, the only freedom is to choose the flips of the pertinent
  graphs admitting only $(1,2)$-embeddings.  A face can receive only a
  single path of length~2 from one of its incident edges, and this is
  possible only if the face is a triangle and none of its incident
  edges is a $(2,2)$-edge.  We thus seek a matching between the
  $(1,2)$-edges and their incident faces that can receive a path of
  length~2.  Depending on the size of the faces incident to the parent
  edge and whether they need to receive a path of length~2 in order to
  find a valid embedding, the type is either $(2,2)$ (if both faces
  are triangles and they do not need to receive a path of length~2),
  $(2,3)$ (if one is a triangle that does not need to receive a path
  of length~2) or $(3,3)$ (remaining cases).

  Now consider a P-node.  Each child must have an embedding of type
  $(1,1)$, $(2,2)$, $(2,3)$ or $(3,3)$.  Again, we denote the edges
  whose corresponding pertinent graph admits an embedding of type
  $(a,b)$ as $(a,b)$-edges.

  First observe that removing in any embedding all $(2,2)$-edges
  except for one and placing them next to the single $(2,2)$-edge we
  did not remove results in a valid embedding whose boundary paths do
  not increase.  Thus, we can assume without loss of generality that
  there is at most one $(2,2)$-edge.  Moreover, if there is a
  $(2,3)$-edge, we can actually move the $(2,2)$-edge next to it
  without increasing the size of any face.  Thus, if there are any
  $(2,3)$-edges we can even assume that there is no $(2,2)$-edge.

  Let us first assume that there is no $(2,3)$-edge.  We then have to
  choose the embedding such that $(1,1)$-edges alternate with
  $(3,3)$-edges and the single $(2,2)$-edge.  We append any excess of
  $(1,1)$-edges at the end.  Let $a$ denote the number of
  $(1,1)$-edges and let $b$ denote the total number of $(2,2)$ and
  $(3,3)$-edges.  A valid embedding exists only if $a \ge b-1$.  In
  this case a suitable sequence always exists.  If possible, we start
  and end with a $(1,1)$-edge, resulting in a $(1,1)$-embedding.  If
  this is not the case, we try to start with a $(1,1)$ and put the
  $(2,2)$ in the end if it exists.  Then we obtain a $(1,2)$-embedding
  if there is a $(2,2)$-edge and a $(1,3)$-embedding otherwise.  If
  this is also not possible since $a = b-1$, we start with the
  $(2,2)$-edge if it exists.  This results in either a $(2,3)$ or a
  $(3,3)$-embedding.

  The bottleneck concerning the running time is finding the matching
  for treating the R-node, which can be solved in $O(n^{1.5})$
  time~\cite{g-ertdc-83}.
\end{proof}

\subsection{Approximation Algorithm}
\label{sec:approx-algo}

In this section, we present a constant-factor approximation algorithm for the
problem of minimizing the largest face in an embedding of a biconnected graph
$G=(V,E)$. We again solve the problem by dynamic programming on the SPQR-tree
of~$G$.

Let $G$ be a biconnected planar graph, and let $\mathcal T$ be its SPQR-tree,
rooted at an arbitrary Q-node. Let $\mu$ be a node of $\mathcal T$. We shall
consider the embeddings of $\pert(\mu)$ where the two poles are embedded on the
outer face. We also include the parent edge in the embedding, by drawing it in
the outer face. In such an embedding of $\skel(\mu)$, the two faces incident to
the parent edge are called \emph{the outer faces}, while the remaining faces
are \emph{inner faces}. 

Recall that an \emph{$(a,b)$-embedding} of $\pert(\mu)$ is an embedding whose
boundary paths have lengths $a$ and $b$, where we always assume that $a\le b$.
We say that an $(a,b)$-embedding of $\pert(\mu)$ is \emph{out-minimal} if for
any $(a',b')$-embedding of $\pert(\mu)$, we have $a\le a'$ and $b\le b'$. Note
that an out-minimal embedding need not exist; e.g., $\pert(\mu)$ may admit a
$(2,4)$-embedding and a $(3,3)$-embedding, but no $(a,b)$-embedding with $a\le
2$ and $b\le 3$. We will later show, however, that such a situation can only
occur when $\mu$ is an S-node.

Let $\opt(G)$ denote the smallest integer $k$ such that $G$ has an embedding
whose every face has size at most~$k$. For a node $\mu$ of $\mathcal T$, we say
that an embedding of $\pert(\mu)$ is \emph{$c$-approximate}, if each inner face
of the embedding has size at most $c\cdot\opt(G)$. 

Call an embedding of $\pert(\mu)$ \emph{neat} if it is out-minimal and
6-approximate. The main result of this section is the next proposition.

  \begin{proposition} \label{pro:approx}Let $G$ be a biconnected planar graph with SPQR
    tree $\mathcal T$, rooted at an arbitrary $Q$-node. Then the
    pertinent graph of every Q-node, P-node or R-node of $\mathcal T$
    has a neat embedding, and this embedding may be computed in
    polynomial time.
  \end{proposition}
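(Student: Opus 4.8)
The plan is to prove the proposition by structural induction on the SPQR-tree $\mathcal T$, processing the nodes bottom-up and maintaining the invariant that for every Q-, P-, and R-node $\mu$ we have computed a neat (out-minimal and $6$-approximate) embedding of $\pert(\mu)$, together with enough bookkeeping data about S-nodes so that their parents can be handled. The base case is a Q-node, whose pertinent graph is a single edge: its unique embedding is $(1,1)$, which is trivially out-minimal and has no inner faces, hence vacuously $6$-approximate. For the inductive step I would treat the three composition types separately, and at each node I would need to relate the sizes of the newly created inner faces of $\skel(\mu)$ to $\opt(G)$.

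The first idea is that whenever a child $\nu$ of $\mu$ is itself a Q-, P-, or R-node, induction hands me a neat embedding of $\pert(\nu)$ directly; when $\nu$ is an S-node I do not get out-minimality, but an S-node pertinent graph is a path of blocks arranged in series, so I can describe its possible boundary-path length pairs explicitly (each child contributes its own pair, and the two boundary paths of the series composition are obtained by summing one coordinate from each child around the cycle). The key structural fact to prove early is the claim already flagged in the text: an out-minimal embedding fails to exist only at S-nodes. I would prove this by noting that for P- and R-nodes the relevant embedding choices (permuting parallel edges, or flipping the rigid skeleton) let one simultaneously minimize both boundary paths, whereas in a series composition minimizing one boundary path may force the other to grow. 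For the R-node case I would observe that the skeleton embedding is fixed up to a flip, so the only freedom is the flips/choices inside the children and which of the two skeleton faces incident to the parent edge becomes which outer face; I then pick child embeddings that are out-minimal (for Q/P/R children) or that minimize the contribution to the two outer faces (for S children), argue that the resulting embedding is out-minimal, and bound each inner face of $\skel(\mu)$: such a face is a cycle in the $3$-connected skeleton through some virtual edges, each virtual edge contributing at most its larger boundary path, and I would relate that sum to $\opt(G)$ by observing that in an optimal embedding of $G$ the corresponding cycle already bounds a region containing at least the ``short side'' of each expansion graph, giving a factor that stays within~$6$.

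For a P-node $\mu$ I would follow the structure of the proof of Theorem~\ref{thm:4-minmaxface}: reduce to having at most one ``thick'' child on each side by coalescing edges of the same type, order the virtual edges of $\skel(\mu)$ around the two poles so as to alternate thin $(1,\cdot)$-edges with thick ones, and append any surplus of thin edges; this yields an out-minimal embedding, and each inner face of $\skel(\mu)$ is bounded by exactly two virtual edges, so its size is at most the sum of the two larger boundary paths of two consecutive children, which I again bound against $\opt(G)$. The delicate point in all three cases is that ``$6$-approximate'' must hold for the inner faces created at $\mu$ using only the inductive guarantee that the children's inner faces are already $6$-approximate and their boundary paths are out-minimal; I would argue that an out-minimal boundary path of length $\ell$ for $\pert(\nu)$ forces some face of $G$ in any embedding to contain a path of length roughly $\ell$ on that side, so $\ell$ is within a constant factor of $\opt(G)$, and summing a bounded number of such contributions (two per inner face, by the structure of P- and R-skeletons) keeps the constant at~$6$.

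The main obstacle I anticipate is exactly this last accounting: showing that out-minimal boundary-path lengths are themselves $O(\opt(G))$, and that when several children's boundary paths combine into one inner face at $\mu$ the constants do not blow past~$6$. In particular the S-node children are awkward, because their boundary paths are not out-minimal and can be large on one side while small on the other; I would need a lemma saying that for an S-node one can always choose, for each of the two outer faces of the parent, an embedding that keeps the contribution to that face small, essentially by orienting each series block the ``cheap'' way, and that this choice is compatible with out-minimality of the parent. Carefully stating and proving that auxiliary lemma about S-nodes (the one the text promises to ``show later'') is where the real work lies; once it is in hand, the P-, R-, and Q-node cases are routine combinatorial checks of the kind already carried out in the proofs of Theorems~\ref{thm:three-minmax} and~\ref{thm:4-minmaxface}.
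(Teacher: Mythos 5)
Your high-level plan — bottom-up induction on the SPQR-tree, trivial base case at Q-nodes, explicit handling of S-node children by their parents — is the same as the paper's. But two concrete technical ideas that make the paper's proof go through are missing from your sketch, and at least one of the gaps is genuine.

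First, the paper does not prove an auxiliary ``cheap orientation'' lemma for S-nodes. Instead it uses an \emph{expanded skeleton}: whenever a child of $\mu$ is an S-node whose skeleton is a path of length $m$, that virtual edge is replaced by a path of $m$ edges, each of which corresponds to a grandchild that is \emph{not} an S-node. After this replacement every edge of the expanded skeleton has a neat, out-minimal embedding available by induction, and the S-node bookkeeping you anticipate as ``where the real work lies'' disappears. This is mostly a matter of elegance, and your alternative could be made to work, but you would need to actually state and prove the lemma you gesture at.

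Second, and more seriously, your proposed treatment of R-nodes would not give a constant factor. You suggest bounding an inner face $f$ of $\skel(\mu)$ by the sum of the \emph{longer} boundary paths of its incident expansion graphs, and relating that to $\opt(G)$ via the observation that any embedding must contain the \emph{shorter} boundary paths. But for a single expansion graph the ratio (long side)/(short side) is unbounded, and a face of the $3$-connected skeleton can be incident to arbitrarily many virtual edges, so $\sum_e b_e$ is in general not within any constant factor of $\sum_e a_e$ or of $\opt(G)$. The real difficulty is a global balancing problem: for each edge of the skeleton incident to two inner faces one must decide which face gets the long side, and a greedy or face-by-face bound does not control this. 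The paper resolves it with a linear programming relaxation (fractionally splitting each edge's contribution between its two incident faces subject to boundedness by the neat $(a,b)$-types) followed by a rounding step, proving (i) the LP optimum $M$ satisfies $M\le\opt_{out}\le 3\cdot\opt(G)$, and (ii) rounding loses at most a factor of $2$, so every shallow face is at most $2M\le 6\cdot\opt(G)$. Without an argument of this kind — or a combinatorial surrogate for it — the R-node case of your induction does not close. Your P-node bound, by contrast, is fine, precisely because each inner face of a P-skeleton is bounded by exactly two expansion graphs, which is what lets the paper's simple ``$|f|\le 2d_j\le 4\cdot\opt(G)$'' argument work there.
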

Since the pertinent graph of the root of $\mathcal T$ is the whole graph $G$,
the proposition implies a polynomial 6-approximation algorithm for minimization
of largest face.

Our proof of Proposition~\ref{pro:approx} is constructive. Fix a node $\mu$
of $\mathcal T$ which is not an S-node.  We now describe an algorithm that
computes a neat embedding of $\pert(\mu)$, assuming that neat embeddings are
available for the pertinent graphs of all the descendant nodes of $\mu$ that are
not $S$-nodes. We distinguish cases based on the type of the
node~$\mu$.

\paragraph{Non-root Q-nodes.}
As a base case, suppose that $\mu$ is a non-root Q-node of $\mathcal T$. Then
$\pert(\mu)$ is a single edge, and its unique embedding is clearly neat. 

\paragraph{P-nodes.}
Next, suppose that $\mu$ is a P-node with $k$ child nodes $\mu_1,\dotsc,\mu_k$,
represented by $k$ skeleton edges $e_1,\dotsc,e_k$. Let $G_i$ be the expansion
graph of~$e_i$. We construct the \emph{expanded skeleton} $\skel^*(\mu)$ as
follows: if for some $i$ the child node $\mu_i$ is an S-node whose skeleton is a
path of length $m$, replace the edge $e_i$ by a path of length $m$, whose
edges correspond in a natural way to the edges of $\skel(\mu_i)$. 

Every edge $e'$ of the expanded skeleton corresponds to a node $\mu'$ of
$\mathcal T$ which is a child or a grand-child of~$\mu$. Moreover, $\mu'$ 
is not an S-node, and we may thus assume that we have already computed a neat
embedding for~$\pert(\mu')$. Note that $\pert(\mu')$ is the expansion graph
of~$e'$.

For each $i\in\{1,\dotsc,k\}$ define $\ell_i$ to be the smallest value
such that $G_i$ has an embedding with boundary path of length~$\ell_i$. We
compute $\ell_i$ as follows: if $\mu_i$ is not and S-node, then we already
know a neat $(a_i,b_i)$-embedding of $G_i$, and we may put $\ell_i=a_i$. If, on
the other hand, $\mu_i$ is an S-node, then let $m$ be the number of edges in the
path $\skel(\mu_i)$, and let $G_i^1, G_i^2,\dotsc, G_i^m$ be the expansion
graphs of the edges of the path. For each $G_i^j$, we have already computed a
neat $(a_j,b_j)$-embedding, so we may now put $\ell_i=\sum_{j=1}^m a_j$.
Clearly, this value of $\ell_i$ corresponds to the definition given above.

We now fix two distinct indices $\alpha, \beta\in\{1,\dotsc,k\}$, so that
the values $\ell_\alpha$ and $\ell_\beta$ are as small as possible; formally,
$\ell_\alpha=\min\{\ell_i;\; i=1,\dotsc,k\}$ and $\ell_\beta=\min\{\ell_i;\;
i=1,\dotsc,k \text{ and } i\neq\alpha\}$. 

Let us fix an embedding of $\skel(\mu)$ in which $e_\alpha$ and $e_\beta$ are
adjacent to the outer faces. We extend this embedding of $\skel(\mu)$ into an
embedding of $\pert(\mu)$ by replacing each edge of $\skel^*(\mu)$ by a neat
embedding of its expansion graph, in such a way that the two boundary paths
have lengths $\ell_\alpha$ and~$\ell_\beta$. Let $\mathcal E$ be the resulting 
 $(\ell_\alpha,\ell_\beta)$-embedding of $\pert(\mu)$.

We now show that $\mathcal E$ is neat. From the definitions of $\ell_\alpha$ and
$\ell_\beta$, we easily see that $\mathcal E$ is out-minimal. It remains to
show that it is 6-approximate. Let $f$ be any inner face of $\mathcal E$. If
$f$ is an inner face of the expansion graph $G_i$ of some $e_i$, then $f$ is an
inner face of some previously constructed neat embedding, hence $|f|\le
6\cdot\opt(G)$. 

Suppose then that $f$ is not the inner face of any~$G_i$. Then the boundary of
$f$ intersects two distinct expansion graphs $G_i$ and $G_j$. Hence the boundary
of $f$ is the union of two paths $P_i$ and $P_j$, with $P_i\subseteq G_i$ and
$P_j\subseteq G_j$. Let $d_i$ and $d_j$ be the lengths of $P_i$ and $P_j$,
respectively, and assume that $d_i\le d_j$. It follows that $|f|=d_i+d_j\le
2d_j$. We claim that every embedding of $G$ has a face of size at least
$d_j/2$. If $\mu_j$ is not an S-node, this follows from the fact that $P_j$ is a
boundary path in an out-minimal embedding of $G_j$, hence any other embedding of
$G_j$ must have a boundary path of length at least $d_j$. If, on the other hand,
$\mu_j$ is an S-node, then in every embedding of $G_j$, the two boundary paths
have total length at least $d_j$, so every embedding of $G_j$ has a boundary
path of length at least $d_j/2$ and thus $G$ has a face of size at
least~$d_j/2$. We conclude that $|f|\le 2d_j\le 4\cdot\opt(G)$, showing that
$\mathcal E$ is indeed neat.

\paragraph{R-nodes.} Suppose now that $\mu$ is an R-node. As with P-nodes, we
define the \emph{expanded skeleton} $\skel*(\mu)$ by replacing each edge of
$\skel(\mu)$ corresponding to an S-node by a path of appropriate length. The
graph $\skel*(\mu)$ together with the parent edge forms a subdivision of a
3-connected graph. In particular, its embedding is determined uniquely up to a
flip and a choice of outer face. Fix an embedding of $\skel^*(\mu)$ and the
parent edge, so that the parent edge is on the outer face. Let $f_1$ and $f_2$
be the two faces incident to the parent edge of~$\mu$. 

Let $e$ be an edge of $\skel^*(\mu)$, let $G_e$ be its expansion graph, and let
$\mathcal E_e$ be a neat $(a,b)$-embedding of $G_e$, for some $a\le b$. The
boundary path of $\mathcal E_e$ of length $a$ will be called \emph{the short
side} of $\mathcal E_e$, while the boundary path of length $b$ will be \emph{the
long side}. If $a=b$, we choose the long side and short side arbitrarily.

Our goal is to extend the embedding of $\skel^*(\mu)$ into an embedding of
$\pert(\mu)$ by replacing each edge $e$ of $\skel^*(\mu)$ with a copy of
$\mathcal E_e$. In doing so, we have to choose which of the two faces incident
to $e$ will be adjacent to the short side of $\mathcal E_e$. 

First of all, if $e$ is an edge of $\skel^*(\mu)$ incident to one of the outer
faces $f_1$ or $f_2$, we embed $\mathcal E_e$ in such a way that its short side
is adjacent to the outer face. Since $f_1$ and $f_2$ do not share an edge in
$\skel^*(\mu)$, such an embedding is always possible, and guarantees that the
resulting embedding of $\pert(\mu)$ will be out-minimal.  

It remains to determine the orientation of $\mathcal E_e$ for the edges $e$
that are not incident to the outer faces, in such a way that the largest face
of the resulting embedding will be as small as possible. Rather than solving
this task optimally, we formulate a linear programming relaxation, and then
apply a rounding step which will guarantee a constant factor approximation.

Intuitively, the linear program works as follows: given an edge $e$ incident to
a pair of faces $f$ and $g$, and a corresponding graph $G_e$ with a short side
of length $a$ and a long side of length $b$, rather than assigning the short
side to one face and the long side to the other, we assign to each of the two
faces a fractional value in the interval $[a,b]$, so that the two values
assigned by $e$ to $f$ and $g$ have sum $a+b$, and the maximum total amount
assigned to a single face of $\skel^*(\mu)$ from its incident edges is as small
as possible.

More precisely, we consider the linear program with the set of variables
\[
\{M\}\cup\{x_{e,f};\; e\text{ is an edge adjacent to face }f\},
\]
where the goal is to minimize $M$ subject to the following constraints:
\begin{itemize}
 \item For every edge $e$ adjacent to a pair of faces $f$ and $g$, we have the
constraints $x_{e,f}+x_{e,g}=a+b$, $a\le x_{e,f}\le b$ and $a\le x_{e,g}\le b$,
where $a\le b$ are the lengths of the two boundary paths of $\mathcal E_e$.
\item Moreover, if an edge $e$ is adjacent to an outer face $f\in\{f_1,f_2\}$
as well as an inner face $g$, then we set $x_{e,f}=a$ and $x_{e,g}=b$, with $a$
and $b$ as above.
\item For every inner face $f$ of $\skel^*(\mu)$, we have the constraint $\sum_e
x_{e,f}\le M$, where the sum is over all edges incident to~$f$.
\end{itemize}

Given an optimal solution of the above linear program, we determine the
embedding of $\pert(\mu)$ as follows: for an edge $e$ of $\skel^*(\mu)$ incident
to two inner faces $f$ and $g$, if $x_{e,f}\le x_{e,g}$, embed $\mathcal E_e$
with its short side incident to $f$ and long side incident to $g$. Let
$\mathcal{E}_\mu$ be the resulting embedding.

We claim that $\mathcal E_\mu$ is neat. We have already seen that $\mathcal
E_\mu$ is out-minimal, so it remains to show that every inner face of
$\mathcal{E}_\mu$ has size at most $6\cdot\opt(G)$. Let us say that an inner
face of $\mathcal E_\mu$ is \emph{deep} if it is also an inner face of some
$\mathcal E_e$, and it is \emph{shallow} if it corresponds to a face of
$\skel^*(\mu)$. Note that the deep faces have size at most $6\cdot\opt(G)$,
since all the $\mathcal{E}_e$ are neat embeddings, so we only need to estimate
the size of the shallow faces.

Let $\opt_{out}$ denote the minimum $k$ such that $\pert(\mu)$ has an
out-minimal embedding whose every shallow face has size at most~$k$. We claim
that $\opt_{out}\le 3\cdot\opt(G)$. To see this, consider an embedding of
$\pert(\mu)$ in which each face has size at most $\opt(G)$. In this embedding,
replace each subembedding of $G_e$ by a copy of $\mathcal E_e$, without
increasing the size of any shallow face. This can be done, because each
$\mathcal E_e$ is out-minimal. Call the resulting embedding $\mathcal E'$. Next,
for every edge $e$ of $\skel^*(\mu)$ adjacent to $f_1$ or $f_2$, flip
$\mathcal E_e$ so that its short side is incident to $f_1$ or $f_2$. Let
$\mathcal E''$ be the resulting embedding of $\pert(\mu)$. Clearly, $\mathcal
E''$ is out-minimal.

In $\mathcal E''$, some inner shallow face $f$ adjacent to $f_1$ or $f_2$ may
have larger size than the corresponding face of $\mathcal E'$; however, for such
an $f$, its size in $\mathcal E''$ is at most equal to the sum of the sizes of
$f$, $f_1$ and $f_2$ in $\mathcal E'$. In particular, each inner shallow face
has size at most $3\cdot \opt(G)$ in $\mathcal E''$, and hence $\opt_{out}\le
3\cdot\opt(G)$, as claimed. 

We will now show that each shallow face of $\mathcal E_\mu$ has size at most
$2\cdot\opt_{out}$. Let $M$ be the value of optimum solution to the linear
program defined above. Clearly, $M\le \opt_{out}$, since from an out-minimal
embedding with shallow faces of size at most $\opt_{out}$, we may directly
construct a feasible solution of the linear program with value $\opt_{out}$. 
Let $f$ be a shallow face of $\mathcal E_\mu$. Let $e$ be an edge of
$\skel^*(\mu)$ incident to $f$, and let $g$ be the other face incident to~$e$. 
Let $a$ and $b$ be the lengths of the short side and long side of $\mathcal
E_e$, respectively. If $x_{e,f}\le x_{e,g}$, then $\mathcal E_e$ contributes to
the boundary of $f$ by its short side, which has length $a$. Otherwise, $f$ has
the long side of $\mathcal E_e$ on its boundary, but that may only happen when
$x_{e,f}\ge x_{e,g}$, and hence $b\le a+b=x_{e,f}+x_{e,g}\le 2x_{e,f}$. From
this, we see that $f$ has size at most $\sum_e 2x_{e,f}\le 2M$, with the
previous sum ranging over all edges of $\skel^*(\mu)$ incident to~$f$. 

Thus, for every shallow face $f$ of $\mathcal E_\mu$, we have $|f|\le 2M\le
2\cdot\opt_{out}\le 6\cdot \opt(G)$, showing that $\mathcal E_\mu$ is neat.

\paragraph{The root Q-node.} Finally, suppose that $\mu$ is the root of the
SPQR-tree $\mathcal T$. That means that $\mu$ is a Q-node, and its skeleton is
formed by two parallel edges $e_1$ and $e_2$, where the expansion graph of $e_1$
is a single edge and the expansion graph $G_2$ of $e_2$ is the pertinent graph
of the unique child node $\mu'$ of~$\mu$. If $\mu'$ is not an S-node, we already
have a neat $(a,b)$-embedding $\mathcal E_2$ of $G_2$, and by inserting the
edge $e_1$ to this embedding in such a way that the outer face has size $a+1$,
we clearly obtain a neat embedding of~$G$. If $\mu'$ is an S-node, then $G_2$
is a chain of biconnected graphs $G_2^1, G_2^2,\dotsc, G_2^k$, and for each
$G_2^i$ we have a neat $(a_i,b_i)$-embedding. Combining these embedding in an
obvious way, and adding the edge $e_1$, we get an embedding of $G$ whose outer
face has size $1+a_1+a_2+\dotsb+a_k$, and whose unique inner shallow face has
size $1+b_1+b_2+\dotsb+b_k$. Since in each embedding of $G$, the two faces
incident to $e_1$ have total size at least $2+a_1+\dotsb+a_k+b_1+\dotsb+b_k$,
we conclude that our embedding of $G$ is neat.

This completes the proof of Proposition~\ref{pro:approx}, and yields a
6-approximation algorithm for the minimization of largest face in biconnected
graphs.

\remove{
In this section, we present a constant-factor approximation algorithm
for the problem of minimizing the largest face in an embedding of a
biconnected graph~$G$. We omit the correctness proofs and
some of the technical details.

We again solve the problem by dynamic programming on the SPQR-tree
of~$G$.

Let $G$ be a biconnected planar graph, and let $\mathcal T$ be its SPQR-tree,
rooted at an arbitrary Q-node. Let $\mu$ be a node of $\mathcal T$.
We also include the parent edge in the embedding of~$\skel(\mu)$, by
drawing it in the outer face.  In such an embedding, the two faces
incident to the parent edge are called \emph{the outer faces}; the
remaining faces are \emph{inner faces}.

Recall that an \emph{$(a,b)$-embedding} of $\pert(\mu)$ is an embedding whose
boundary paths have lengths $a$ and $b$, where we always assume that $a\le b$.
We say that an $(a,b)$-embedding of $\pert(\mu)$ is \emph{out-minimal} if for
any $(a',b')$-embedding of $\pert(\mu)$, we have $a\le a'$ and $b\le b'$. Note
that an out-minimal embedding need not exist; e.g., $\pert(\mu)$ may admit a
$(2,4)$-embedding and a $(3,3)$-embedding, but no $(a,b)$-embedding with $a\le
2$ and $b\le 3$. We will later show, however, that such a situation can only
occur when $\mu$ is an S-node.

Let $\opt(G)$ denote the smallest integer $k$ such that $G$ has an embedding
whose every face has size at most~$k$. For a node $\mu$ of $\mathcal T$, we say
that an embedding of $\pert(\mu)$ is \emph{$c$-approximate}, if each inner face
of the embedding has size at most $c\cdot\opt(G)$. 

Call an embedding of $\pert(\mu)$ \emph{neat} if it is out-minimal and
6-approximate. The main result of this section is the next proposition.

\begin{proposition}\label{pro:approx} Let $G$ be a biconnected planar graph
with SPQR tree $\mathcal T$, rooted at an arbitrary $Q$-node. Then the
pertinent graph of every Q-node, P-node or R-node of $\mathcal T$ has a
neat embedding, and this embedding may be computed in polynomial time.
\end{proposition}

Since the pertinent graph of the root of $\mathcal T$ is the whole
graph $G$, the proposition implies a polynomial 6-approximation
algorithm for minimizing the largest face.

Our proof of Proposition~\ref{pro:approx} is constructive. Fix a node $\mu$
of $\mathcal T$ which is not an S-node.  We now describe an algorithm that
computes a neat embedding of $\pert(\mu)$, assuming that neat embeddings are
available for the pertinent graphs of all the descendant nodes of $\mu$ that are
not $S$-nodes. We distinguish cases based on the type of the
node~$\mu$. We here only present the two difficult cases, when $\mu$ is a P-node 
or an R-node.

\paragraph{P-nodes.}
Suppose that $\mu$ is a P-node with $k$ child nodes $\mu_1,\dotsc,\mu_k$,
represented by $k$ skeleton edges $e_1,\dotsc,e_k$. Let $G_i$ be the expansion
graph of~$e_i$. We construct the \emph{expanded skeleton} $\skel^*(\mu)$ as
follows: if for some $i$ the child node $\mu_i$ is an S-node whose skeleton is a
path of length $m$, replace the edge $e_i$ by a path of length $m$, whose
edges correspond in a natural way to the edges of $\skel(\mu_i)$. 

Every edge $e'$ of the expanded skeleton corresponds to a node $\mu'$ of
$\mathcal T$ which is a child or a grand-child of~$\mu$. Moreover, $\mu'$ 
is not an S-node, and we may thus assume that we have already computed a neat
embedding for~$\pert(\mu')$. Note that $\pert(\mu')$ is the expansion graph
of~$e'$.

For each $i\in\{1,\dotsc,k\}$ define $\ell_i$ to be the smallest value
such that $G_i$ has an embedding with a boundary path of length~$\ell_i$. We
compute $\ell_i$ as follows: if $\mu_i$ is not an S-node, then we already
know a neat $(a_i,b_i)$-embedding of $G_i$, and we may put $\ell_i=a_i$. If, on
the other hand, $\mu_i$ is an S-node, then let $m$ be the number of edges in the
path $\skel(\mu_i)$, and let $G_i^1, G_i^2,\dotsc, G_i^m$ be the expansion
graphs of the edges of the path. For each $G_i^j$, we have already computed a
neat $(a_j,b_j)$-embedding, so we may now put $\ell_i=\sum_{j=1}^m a_j$.
Clearly, this value of $\ell_i$ corresponds to the definition given above.

We now fix two distinct indices $\alpha, \beta\in\{1,\dotsc,k\}$, so that
the values $\ell_\alpha$ and $\ell_\beta$ are as small as possible; formally,
$\ell_\alpha=\min\{\ell_i;\; i=1,\dotsc,k\}$ and $\ell_\beta=\min\{\ell_i;\;
i=1,\dotsc,k \text{ and } i\neq\alpha\}$. 

Let us fix an embedding of $\skel(\mu)$ in which $e_\alpha$ and $e_\beta$ are
adjacent to the outer faces. We extend this embedding of $\skel(\mu)$ into an
embedding of $\pert(\mu)$ by replacing each edge of $\skel^*(\mu)$ by a neat
embedding of its expansion graph, in such a way that the two boundary paths
have lengths $\ell_\alpha$ and~$\ell_\beta$. Let $\mathcal E$ be the resulting 
 $(\ell_\alpha,\ell_\beta)$-embedding of $\pert(\mu)$. The embedding $\mathcal E$ 
is neat (we omit the proof).

\paragraph{R-nodes.} Suppose now that $\mu$ is an R-node. As with
P-nodes, we define the \emph{expanded skeleton} $\skel^*(\mu)$ by
replacing each edge of $\skel(\mu)$ corresponding to an S-node by a
path of appropriate length. The graph $\skel^*(\mu)$ together with the
parent edge forms a subdivision of a 3-connected graph. In particular,
its embedding is determined uniquely up to a flip and a choice of
outer face. Fix an embedding of $\skel^*(\mu)$ and the parent edge, so
that the parent edge is on the outer face. Let $f_1$ and $f_2$ be the
two faces incident to the parent edge of~$\mu$.

Let $e$ be an edge of $\skel^*(\mu)$, let $G_e$ be its expansion graph, and let
$\mathcal E_e$ be a neat $(a,b)$-embedding of $G_e$, for some $a\le b$. The
boundary path of $\mathcal E_e$ of length $a$ will be called \emph{the short
side} of $\mathcal E_e$, while the boundary path of length $b$ will be \emph{the
long side}. If $a=b$, we choose the long side and short side arbitrarily.

Our goal is to extend the embedding of $\skel^*(\mu)$ into an embedding of
$\pert(\mu)$ by replacing each edge $e$ of $\skel^*(\mu)$ with a copy of
$\mathcal E_e$. In doing so, we have to choose which of the two faces incident
to $e$ will be adjacent to the short side of $\mathcal E_e$. 

First of all, if $e$ is an edge of $\skel^*(\mu)$ incident to one of the outer
faces $f_1$ or $f_2$, we embed $\mathcal E_e$ in such a way that its short side
is adjacent to the outer face. Since $f_1$ and $f_2$ do not share an edge in
$\skel^*(\mu)$, such an embedding is always possible, and guarantees that the
resulting embedding of $\pert(\mu)$ will be out-minimal.  

It remains to determine the orientation of $\mathcal E_e$ for the edges $e$
that are not incident to the outer faces, in such a way that the largest face
of the resulting embedding will be as small as possible. Rather than solving
this task optimally, we formulate a linear programming relaxation, and then
apply a rounding step which will guarantee a constant factor approximation.

Intuitively, the linear program works as follows: given an edge $e$ incident to
a pair of faces $f$ and $g$, and a corresponding graph $G_e$ with a short side
of length $a$ and a long side of length $b$, rather than assigning the short
side to one face and the long side to the other, we assign to each of the two
faces a fractional value in the interval $[a,b]$, so that the two values
assigned by $e$ to $f$ and $g$ have sum $a+b$, and the maximum total amount
assigned to a single face of $\skel^*(\mu)$ from its incident edges is as small
as possible.

More precisely, we consider the linear program with the set of variables
\[
\{M\}\cup\{x_{e,f};\; e\text{ is an edge adjacent to face }f\},
\]
where the goal is to minimize $M$ subject to the following constraints:
\begin{itemize}
 \item For every edge $e$ adjacent to a pair of faces $f$ and $g$, we have the
constraints $x_{e,f}+x_{e,g}=a+b$, $a\le x_{e,f}\le b$ and $a\le x_{e,g}\le b$,
where $a\le b$ are the lengths of the two boundary paths of $\mathcal E_e$.
\item Moreover, if an edge $e$ is adjacent to an outer face $f\in\{f_1,f_2\}$
as well as an inner face $g$, then we set $x_{e,f}=a$ and $x_{e,g}=b$, with $a$
and $b$ as above.
\item For every inner face $f$ of $\skel^*(\mu)$, we have the constraint $\sum_e
x_{e,f}\le M$, where the sum is over all edges incident to~$f$.
\end{itemize}

Given an optimal solution of the above linear program, we determine the
embedding of $\pert(\mu)$ as follows: for an edge $e$ of $\skel^*(\mu)$ incident
to two inner faces $f$ and $g$, if $x_{e,f}\le x_{e,g}$, embed $\mathcal E_e$
with its short side incident to $f$ and long side incident to $g$. Let
$\mathcal{E}_\mu$ be the resulting embedding. It can be shown that 
$\mathcal E_\mu$ is neat. 

Proposition~\ref{pro:approx} yields a
6-approximation algorithm for the minimization of largest face in biconnected
graphs.
}

\begin{theorem}
  A $6$-approximation for \minmaxface in biconnected graphs can be
  computed in polynomial time.
\end{theorem}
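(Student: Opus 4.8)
The plan is to obtain this theorem as an immediate consequence of Proposition~\ref{pro:approx}. Recall that the SPQR-tree $\mathcal T$ of $G$ is rooted at a Q-node, and the pertinent graph of this root node is the whole graph $G$. By Proposition~\ref{pro:approx}, $\pert$ of the root has a neat embedding that can be computed in polynomial time; being neat, this embedding is in particular $6$-approximate, so each of its inner faces has size at most $6\cdot\opt(G)$.

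The only point that still needs a remark is the outer face, since being \emph{$c$-approximate} as defined only constrains the inner faces. Here I would invoke the construction used in the proof of Proposition~\ref{pro:approx} for the root Q-node: the skeleton consists of the real edge $e_1$ together with the virtual edge carrying the unique child $\mu'$, and the embedding produced has a unique inner shallow face of size $1+\sum_i b_i$ and an outer face of size $1+\sum_i a_i$ (or simply $1+b$ and $1+a$ when $\mu'$ is not an S-node), with $a_i\le b_i$ for all $i$. Since the inner shallow face is an inner face it has size at most $6\cdot\opt(G)$, and because $a_i\le b_i$ the outer face is no larger, hence also at most $6\cdot\opt(G)$. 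Thus every face of the computed embedding of $G$ has size at most $6\cdot\opt(G)$, i.e.\ the embedding realizes a $6$-approximation of \minmaxface.

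For the running time I would note that $\mathcal T$ has $O(n)$ nodes and can be computed in linear time; processing a non-root Q-node or a P-node takes polynomial time, and processing an R-node additionally requires solving the linear program described in the proof of Proposition~\ref{pro:approx}, which has size polynomial in the skeleton and is solvable in polynomial time. Summing over all nodes of $\mathcal T$ yields a polynomial-time algorithm that outputs an embedding of $G$ whose largest face has size at most $6\cdot\opt(G)$. I do not anticipate any genuine obstacle in this argument: the entire substance of the result lies in Proposition~\ref{pro:approx}, and the theorem is merely its specialization to the root node, together with the trivial observation that the outer face is bounded by the (already bounded) inner shallow face.
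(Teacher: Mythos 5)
Your proposal matches the paper's reasoning: the theorem is stated as an immediate consequence of Proposition~\ref{pro:approx} applied to the root Q-node, whose pertinent graph is all of $G$. Your extra remark that the outer face (of size $1+\sum_i a_i$, resp.\ $a+1$) is no larger than the inner shallow face (of size $1+\sum_i b_i$, resp.\ $b+1$), and hence also bounded by $6\cdot\opt(G)$, is a correct and worthwhile clarification of a point the paper leaves implicit, since the definition of ``$c$-approximate'' only constrains inner faces.
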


\section{Perfectly Uniform Face Sizes}
\label{sec:regular-duals-small}

In this section we study the problem of deciding whether a biconnected
planar graph admits a $k$-uniform embedding.  Note that, due to
Euler's formula, a connected planar graph with $n$ vertices and $m$
edges has $f = m-n+2$ faces.  In order to admit an embedding where
every face has size~$k$, it is necessary that $2m = fk$.  Hence there
is at most one value of $k$ for which the graph may admit a
$k$-uniform embedding.

In the following, we characterize the graphs admitting
3-uniform and 4-uniform embeddings, and we give an efficient algorithm
for testing whether a graph admits a 6-uniform embedding.  Finally, we
show that testing whether a graph admits a $k$-uniform embedding is
NP-complete for odd $k \ge 7$ and even $k \ge 10$.  We leave open the
cases $k=5$ and $k=8$.

Our characterizations and our testing algorithm use the recursive
structure of the SPQR-tree.  To this end, it is necessary to consider
embeddings of pertinent graphs, where we only require that the
interior faces have size~$k$, whereas the outer face may have
different size, although it must not be too large.  We call such an
embedding \emph{almost $k$-uniform}.  The following lemma states that
the size of the outer face in such an embedding depends only on the
number of vertices and edges in the pertinent graph.

\newcommand{\lemouterfacelengthtext}{Let $G$ be a graph with $n$
  vertices and $m$ edges with an almost $k$-uniform embedding.  Then
  the outer face has length~$\ell = k(n-m-1) + 2m$.}


\begin{lemma}
  \label{lem:outer-face-length}
  \lemouterfacelengthtext
\end{lemma}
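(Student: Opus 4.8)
The plan is to derive the formula directly from Euler's formula applied to the plane graph $G$. First I would observe that since $G$ has an almost $k$-uniform embedding, it is in particular a connected plane graph (it is the pertinent graph of some node, hence biconnected, but connectedness suffices here), so Euler's formula gives $n - m + f = 2$, where $f$ is the total number of faces including the outer face. Thus the number of inner faces is $f - 1 = m - n + 1$.

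Next I would use the handshake-type identity for faces: the sum of the lengths of all faces equals $2m$, since every edge is counted exactly twice in the face-length sum (this holds for any plane graph, counting an edge twice for a face on both sides, which is exactly the situation for boundary walks). By the almost $k$-uniform assumption, each of the $f-1 = m-n+1$ inner faces has length exactly $k$, so the inner faces contribute $k(m-n+1)$ to this sum. Letting $\ell$ denote the length of the outer face, we get $\ell + k(m-n+1) = 2m$, hence $\ell = 2m - k(m-n+1) = k(n-m-1) + 2m$, which is the claimed formula.

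The only genuine subtlety is making sure the face-length counting identity is applied correctly when $G$ may have multiple edges or bridges; but the excerpt assumes throughout that pertinent graphs are biconnected (so no bridges), and the identity ``sum of face lengths $= 2m$'' is valid for any connected plane graph regardless of multi-edges, since it just re-counts edge-side incidences. So there is really no obstacle here — the statement is a one-line consequence of Euler's formula plus edge-side double counting. I would present it in exactly that order: state Euler's formula, count inner faces, apply the face-length identity, and solve for $\ell$.
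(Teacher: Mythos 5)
Your proof is correct and follows exactly the paper's own approach: Euler's formula to count faces, then double counting edge--face incidences (each inner face contributes $k$, the outer face contributes $\ell$, total $2m$), and solving for $\ell$. The extra remarks about multi-edges and bridges are sound but not needed beyond what the paper already assumes.
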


\begin{proof}
  Let $f$ denote the number of faces of $G$ in a planar embedding,
  which us uniquely determined by Euler's formula $n-m+f=2$.  By
  double counting, we find that $(f-1) \cdot k + \ell = 2m$.  Euler's
  formula implies that $f=2+m-n$, and plugging this into the second
  formula, we obtain that $(1+m-n) \cdot k + \ell = 2m$ or,
  equivalently, $\ell = k(n-m-1) + 2m$.
\end{proof}

Thus, for small values of $k$, where the two boundary paths of the
pertinent graph may have only few different lengths, the type of an
almost $k$-uniform embedding is essentially fixed.

\subsection{Characterization for $k=3,4$}


For 3-uniform embeddings first observe that every facial cycle must be
a triangle.  If the input graph is simple, then this implies that it
must be a triangulation.  Then the graph is 3-connected and the planar
embedding is uniquely determined.  We characterize the multi-graphs
that have such an embedding.

\remove{
Using this fact, the biconnected graphs (with possible multiple edges) admitting a
$k$-uniform dual for $k=3,4$ can be characterized. \remove{; we omit the details.  }
We remark that the simple graphs admitting 3-uniform and 4-uniform embeddings
are precisely the maximal planar graphs and the maximal planar bipartite
graphs.}

  \begin{theorem}
    \label{thm:3-uniform}
    A biconnected planar graph $G$ admits $3$-uniform embedding if and
    only if its SPQR-tree satisfies all of the following conditions.

    \begin{compactenum}[(i)]
    \item S- and R-nodes are only adjacent to Q- and P-nodes.
    \item Every R-node skeleton is a planar triangulation.
    \item Every S-node skeleton has size~3.
    \item Every P-node with $k$ neighbors has $k$ even and precisely
      $k/2$ of the neighbors are Q-nodes.
    \end{compactenum}
  \end{theorem}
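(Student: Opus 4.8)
The plan is a bottom-up induction over the SPQR-tree $\mathcal T$ of $G$ (rooted at an arbitrary Q-node) that tracks, for each node $\mu$, which \emph{types} $(a,b)$ of almost 3-uniform embeddings of $\pert(\mu)$ exist, where (as in Section~\ref{sec:polyn-time-algor}) an $(a,b)$-embedding has boundary paths of lengths $a\le b$. In a 3-uniform embedding of $G$ each boundary path of $\pert(\mu)$ is a \emph{proper} subpath of a triangular facial walk, hence has length at most~$2$, so the only relevant types are $(1,1)$, $(1,2)$ and $(2,2)$. Moreover, Lemma~\ref{lem:outer-face-length} with $k=3$ gives $a+b = 3|V(\pert(\mu))|-|E(\pert(\mu))|-3$, which is independent of the embedding, so the type of a node is essentially forced.

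For \emph{necessity} I would restrict a 3-uniform embedding of $G$ to each $\pert(\mu)$, obtaining an almost 3-uniform embedding, and argue by induction. If $\mu$ is an S- or R-node, every non-outer face of $\skel(\mu)$ is a facial walk of $\pert(\mu)$ of size at least the number of its virtual edges (each contributes a path of length $\ge1$); as $\skel(\mu)$ is simple this is $\ge 3$, so the face is a triangle and each incident expansion graph contributes exactly~$1$. The two faces at the parent edge merge into the outer face of $\pert(\mu)$, whose boundary paths have length $\le 2$, which likewise forces those faces to be triangles with each incident expansion graph contributing~$1$. Hence every child of an S- or R-node has type $(1,1)$ and $\pert(\mu)$ has type $(2,2)$, while $\skel(\mu)$ is a triangulation ((ii)) resp.\ a triangle of size~$3$ ((iii)). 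A type-$(1,1)$ node is, by induction, a Q-node or a P-node, and a type-$(2,2)$ node can be a child only of a P-node or the root Q-node; this yields~(i). If $\mu$ is a P-node it is adjacent to no P-node, so all its children are Q-, S- or R-nodes and hence of type $(1,1)$ or $(2,2)$; since each face between two consecutive children in the linear order must be a triangle, the children alternate between type $(1,1)$ and type $(2,2)$, and a case distinction on the parity of the number of children shows that $\pert(\mu)$ has type $(1,1)$, $(2,2)$, or---only when the two kinds of children occur equally often---$(1,2)$. But a $(1,2)$-node can be a child of neither a P-node nor an S-, R- or Q-node (its length-$2$ side would enlarge a triangular face of the parent, or, at the root, turn one of the two faces at the root edge into a non-triangle), so no P-node has type $(1,2)$, the number of children of a P-node is odd, and its number of neighbors $k$ is even. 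Finally, a type-$(1,1)$ P-node has one more Q- than S-/R-child and a non-Q parent, whereas a type-$(2,2)$ P-node has one more S-/R- than Q-child and the root Q-node as parent; in either case exactly $k/2$ of its neighbors are Q-nodes, which is~(iv). The analogous analysis of the root Q-node shows its child has type $(2,2)$.

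For \emph{sufficiency} I would build the embedding bottom-up, proving under (i)--(iv) that $\pert(\mu)$ admits an almost 3-uniform embedding of type $(1,1)$ when $\mu$ is a Q-node or a P-node whose parent is an S- or R-node, and of type $(2,2)$ when $\mu$ is an S-node, an R-node, or a P-node whose parent is the root Q-node. Q-nodes are immediate; an S-node has exactly two children by~(iii), which are of type $(1,1)$ by~(i) and induction, so their series composition has type $(2,2)$; an R-node is handled by embedding $\skel(\mu)$ minus the parent edge and replacing each virtual edge by the length-$1$ side of its type-$(1,1)$ child, keeping every inner face a triangle and producing a type-$(2,2)$ outer face; a P-node has, by~(i) and~(iv), a number of Q-children (its type-$(1,1)$ children) differing by exactly one from its number of S-/R-children (its type-$(2,2)$ children), so the children can be placed in an alternating linear order with the majority type at both ends, making each face between consecutive children a triangle and giving $\pert(\mu)$ the asserted type. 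Attaching the root edge along a length-$2$ side of the root's child, which has type $(2,2)$ by the above, turns the two faces at the root edge into triangles, yielding a 3-uniform embedding of~$G$.

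The step needing the most care is the P-node bookkeeping in both directions: translating condition~(iv), which counts Q-\emph{nodes} among the neighbors, into statements about the \emph{types} of the child pertinent graphs and about whether the parent is a (necessarily root) Q-node, and checking that the parity constraints coming from the alternation match up exactly. One should also dismiss the degenerate graph consisting of two parallel edges (where the root's child would be a Q-node), which lies outside the statement because it fails $2m=3f$ and, once the parallel pair is modeled by a P-node, violates~(iv).
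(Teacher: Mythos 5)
Your sufficiency argument matches the paper's proof essentially step for step: a bottom-up induction on the SPQR-tree with the invariant that Q-nodes and P-nodes whose parent is an S- or R-node have an almost $3$-uniform embedding of type $(1,1)$, while S-nodes, R-nodes, and P-nodes whose parent is the (root) Q-node have type $(2,2)$, with the final step turning the two faces at the root edge into triangles. The paper dismisses necessity with ``it is not hard to see'' and gives no argument; you supply one, and it is correct. In particular, the key counting step---that each virtual edge on a face of a simple skeleton contributes a boundary path of length at least $1$, so a size-$3$ face of $G$ forces the skeleton face to be a triangle with each expansion contributing exactly $1$---is the right way to force conditions (ii) and (iii) and to conclude that all children of S- and R-nodes have type $(1,1)$ (hence are Q- or P-nodes, giving (i)); and your exclusion of the $(1,2)$ P-node type together with the ``one more Q- than non-Q child (resp.\ vice versa)'' bookkeeping correctly yields (iv). The observation that the degenerate graph consisting of two parallel edges (whose SPQR-tree, under the paper's definition, is a pair of adjacent Q-nodes) vacuously satisfies (i)--(iv) but has no $3$-uniform embedding is a legitimate edge case that the theorem statement glosses over; ruling it out via the count $2m=3f$, as you do, is the clean fix.
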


\begin{proof}
  It is not hard to see that all conditions are necessary.  We prove
  sufficiency.  To this end, we choose the embeddings of the R-node
  skeletons arbitrarily, and we embed the P-node skeletons such that
  virtual edges corresponding to Q-node and non-Q-node neighbors
  alternate.  We claim that in the resulting planar embedding of $G$
  all faces have size~3.

  To this end, root the SPQR-tree $\mathcal T$ of $G$ at an arbitrary
  edge $e$ and consider the embedding $e$ incident to the outer face.

  \begin{claim}
    \begin{compactenum}
    \item If $\mu$ is a Q-node or a P-node whose parent is not a
      Q-node, then it has an almost 3-uniform embedding of type
      $(1,1)$.
    \item If $\mu$ is an S-node, an R-node, or a P-node whose parent
      is a Q-node, then it has an almost 3-uniform embedding of type
      $(2,2)$.
    \end{compactenum}
  \end{claim}
  
  We prove this by induction on the height of the node in the
  SPQR-tree.  Clearly, the statement holds for Q-nodes.  Now consider
  an internal node~$\mu$ and assume that the claim holds for all
  children.

  If $\mu$ is an S-node, then the clockwise (counterclockwise) path of
  $\pert(\mu)$ between the poles along the outer face is the
  concatenation of the clockwise path (counterclockwise) paths of the
  pertinent graphs of its children.  By property~(iii) there are only
  two children and by property~(i) they are either Q- or P-nodes.  By
  the inductive hypothesis, their embeddings are almost 3-uniform and
  have type $(1,1)$, and hence the type of the embedding of
  $\pert(\mu)$ is $(2,2)$.

  If $\mu$ is an R-node, then its clockwise (counterclockwise) path
  between the poles is the concatenation of the clockwise
  (counterclockwise) paths of the pertinent graphs corresponding to
  the edges on the clockwise (counterclockwise) path between the
  poles.  By property~(ii) each of these paths has length~2 in
  $\skel(\mu)$ and the children are either Q- or P-nodes.  Thus, by
  the inductive hypothesis, their embeddings have type $(1,1)$.

  If $\mu$ is a P-node whose parent is not a Q-node, then, by our
  choice of the planar embedding, the two outer paths in $\skel(\mu)$
  are edges corresponding to Q-nodes, and the claim follows from the
  inductive hypothesis.  If the parent of $\mu$ is a Q-node, then,
  again by the embedding choice, the two edges outer paths in
  $\skel(\mu)$ are edges corresponding to S- or R-nodes, and again the
  inductive hypothesis implies the claim.  This finishes the proof of
  the claim.

  Let now $\mu$ denote the Q-node corresponding to the root edge $e$
  and consider the two faces incident to $e$, which show up as faces
  in $\skel(\mu)$.  Let $\mu'$ be the neighbor of $\mu$ in the
  SPQR-tree.  Then $\mu'$ is either an S-node, an R-node, or a P-node
  whose parent is a Q-node.  In all cases the embedding of
  $\pert(\mu')$ has type $(2,2)$, and hence the two faces incident to
  $e$ have size~3.  Since $e$ was chosen arbitrarily, it follows that
  each face has size~3.
\end{proof}

\begin{corollary}
  It can be tested in linear time whether a biconnected planar graph
  admits a 3-regular dual.
\end{corollary}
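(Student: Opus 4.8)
The plan is to read off the algorithm directly from the characterization in Theorem~\ref{thm:3-uniform}. A biconnected planar graph $G$ has a $3$-regular dual precisely when it admits a $3$-uniform embedding, so by that theorem it suffices to test the four stated conditions on the SPQR-tree $\mathcal T$ of $G$. First I would compute $\mathcal T$, which takes linear time by~\cite{gm-lis-01}. The key structural fact to invoke is that the total size of all skeletons of $\mathcal T$ is linear in the size of $G$; hence it is enough to show that each condition can be checked in time proportional to the relevant skeleton size (or to the degree of the relevant node in $\mathcal T$), so that summing over all nodes stays linear.

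Conditions~(i), (iii) and~(iv) are immediate. For~(i), traverse the tree edges of $\mathcal T$ and, for each edge incident to an S- or R-node, verify that its other endpoint is a Q- or P-node. For~(iii), check for every S-node that its skeleton cycle has exactly three edges. For~(iv), check for every P-node that its number $k$ of incident tree edges is even and that exactly $k/2$ of the corresponding neighbours are Q-nodes. Each of these inspections is local and costs time linear in the degree of the node, hence linear overall.

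Condition~(ii) is the only one needing a moment's thought: we must decide whether each R-node skeleton is a planar triangulation. Since an R-node skeleton is $3$-connected (and therefore simple), its combinatorial embedding is unique up to reflection and can be obtained in time linear in the skeleton's size; we then walk around each face boundary and verify that every face has length~$3$. (One could instead just compare the number of edges and vertices of the skeleton against the triangulation identity, but explicitly traversing the faces avoids any fuss about virtual edges.) Summing over all R-nodes, this remains linear.

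Finally, if all four conditions hold we answer ``yes'', and the constructive sufficiency argument of Theorem~\ref{thm:3-uniform} guarantees a $3$-uniform embedding, i.e.\ a $3$-regular dual; if some condition fails we answer ``no'', which is correct by the necessity direction of the same theorem. Every step runs in linear time, giving the claimed bound. I do not expect a real obstacle here; the only subtlety is the linearity-of-skeletons observation, which is what turns the per-node checks into a linear-time algorithm overall.
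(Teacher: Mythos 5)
Your proposal is correct and coincides with the intended argument: the paper states the corollary without an explicit proof because it follows from Theorem~\ref{thm:3-uniform} exactly as you describe, namely by computing the SPQR-tree in linear time and checking each of the four conditions locally per node, using the fact that the total skeleton size is linear in the size of~$G$.
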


For 4-uniform embeddings observe that every facial cycle must be a
simple cycle of length~4.  Since every planar graph containing a cycle
of odd length also has a face of odd length in any planar embedding,
it follows that the graph must be bipartite.

Now, if the graph is simple, the graph must be planar, bipartite and
each face must have size~4.  It is well known (and follows from
Euler's formula) that this is the case if and only if the graph has
$2n-4$ edges; the maximum number of edges for a simple bipartite
planar graph.  Again, if the graph is not simple more work is
necessary.  For a virtual edge $e$ in a skeleton $\skel(\mu)$, we
denote by $m_e$ and $n_e$ the number of edges in its expansion graph.

  \begin{theorem}
    \label{thm:4-uniform}
    A biconnected planar graph admits a 4-regular dual if and only if
    it is bipartite and satisfies the following conditions.

    \begin{compactenum}[(i)]
    \item For each P-node either all expansion graphs satisfy $m_e =
      2n_e-4$, or half of them satisfy $m_e=2n_e-5$ and the other half
      are Q-nodes.
    \item For each S- or R-node all faces have size~3 or 4; the
      expansion graphs of all edges incident to faces of size~4
      satisfy $m_e = 2n_e-3$ and for each triangular face, there is
      precisely one edge whose expansion graph satisfies $m_e = 2n_e -
      4$, the others satisfy $m_e = 2n_e-3$.
    \end{compactenum}
  \end{theorem}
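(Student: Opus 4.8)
The plan is to prove both directions by a recursive analysis of the SPQR-tree, in the same spirit as Theorem~\ref{thm:3-uniform}. I fix the SPQR-tree $\mathcal T$ rooted at an arbitrary Q-node and treat every edge of every skeleton, including the parent edge, as a virtual edge with an associated expansion graph $G_e$ on $n_e$ vertices and $m_e$ edges. Necessity of bipartiteness was already observed. The workhorse is Lemma~\ref{lem:outer-face-length}: if $G_e$ admits an almost $4$-uniform embedding then its outer face has length $\ell_e=4n_e-2m_e-4$, and since in a $4$-uniform embedding of $G$ each boundary path of $G_e$ is a proper subpath of a size-$4$ face, both boundary paths have length at most $3$, so $\ell_e\le 6$, i.e.\ $m_e\in\{2n_e-3,2n_e-4,2n_e-5\}$; the only relevant embedding types are $(1,1)$ for $\ell_e=2$, $(2,2)$ or $(1,3)$ for $\ell_e=4$, and $(3,3)$ for $\ell_e=6$ (the lopsided types $(2,4)$, $(1,5)$ can only ever be used at the root). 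Two further easy observations drive everything: an expansion graph with $\ell_e=6$ can occur only as a skeleton edge of a P-node (or as the child of the root), because in an S- or R-node skeleton every face has at least three edges, so a contribution of $\ge 3$ to one of the two faces of $e$ would push that face past size~$4$; and, complementarily, an expansion graph admitting no $(2,2)$-embedding (a ``$(1,3)$-only'' graph) must also be a P-node expansion graph. I will also use the bookkeeping identities $\ell_P=\sum_i \ell_{\mu_i}-4(k-1)$ for a P-node with $k$ children $\mu_i$, and $\ell_{\text{rest}}+\ell_{\pert(\mu)}=8$ for the expansion graph of the parent edge of $\skel(\mu)$.

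For necessity I read off (i) and (ii) from a $4$-uniform embedding of $G$ by inspecting each skeleton face: a face of $\skel(\mu)$ on $s$ edges becomes a face of $G$ whose size is the sum of the $s$ boundary-path lengths that the incident expansion graphs contribute to it, and this sum must equal~$4$. For a P-node the faces are the ``bigons'' between cyclically consecutive parallel edges, so consecutive expansion graphs contribute values summing to~$4$; a child with $\ell_e=6$ then forces contribution~$3$ on both of its sides and Q-node neighbours (contributing~$1$), hence the cyclic order alternates Q-nodes with $\ell=6$ children --- the second alternative of~(i) --- and if no child has $\ell_e=6$ the identity $\ell_P=\sum_i\ell_{\mu_i}-4(k-1)$ together with $\ell_P=4$ (which is forced since $P$ borders size-$4$ faces on both sides) leaves $\ell_{\mu_i}=4$ for every child, the first alternative of~(i). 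For an S- or R-node, $\ell_e=6$ edges are excluded, faces of size $\ge 5$ are impossible, a size-$4$ face forces all four contributions to equal~$1$ (hence $m_e=2n_e-3$ for every incident edge), and a triangular face forces contributions $2+1+1$, i.e.\ exactly one incident edge with $m_e=2n_e-4$ and the other two with $m_e=2n_e-3$ --- this is~(ii); the parent edge of $\skel(\mu)$ is handled uniformly using $\ell_{\text{rest}}+\ell_{\pert(\mu)}=8$.

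For sufficiency I construct a $4$-uniform embedding bottom-up and prove, by induction on the height in $\mathcal T$, that every node admits an almost $4$-uniform embedding of a controlled type: Q-nodes give $(1,1)$; S- or R-nodes give $(2,2)$; and a P-node gives $(2,2)$ if it has a child admitting only a $(2,2)$-embedding and gives $(1,3)$ otherwise (both consistent with $m_P=2n_P-4$). For S-nodes the skeleton is a path and the claim is immediate from the inductive types of the (few, by~(ii)) children; for P-nodes I arrange the cyclic order exactly as in the necessity analysis, alternating the two classes. For R-nodes the skeleton embedding is unique up to reflection, and I must orient each non-outer expansion graph $G_e$ so that every triangular face of $\skel(\mu)$ receives its length-$2$ contribution from exactly the prescribed incident edge (the unique one with $m_e=2n_e-4$) while each size-$4$ face receives $1$ from each of its incident edges; here a $(1,3)$-only expansion graph in a triangular face would be fatal, which is precisely why~(ii) pins the $m_e=2n_e-4$ edges to triangles and why (i) applied recursively to such a child forbids it from sitting inside an R- or S-skeleton, so every expansion graph involved is genuinely $(2,2)$-embeddable. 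Finally, at the root Q-node the unique child $\mu'$ has $n'=n$ and $m'=m-1$; a short counting argument (the all-size-$4$ embedding we have built around $\mu'$ forces $\ell_{\mu'}=6$, equivalently $m=2n-4$) shows $\mu'$ has a $(3,3)$-embedding, and re-inserting the root edge closes off two size-$4$ faces, yielding a $4$-uniform embedding of $G$.

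The main obstacle is the R-node step of the sufficiency direction: proving that the face-by-face ``who supplies the~$2$'' demands imposed by~(ii) can be met simultaneously by a single orientation of every expansion graph, i.e.\ a consistency/feasibility argument that no expansion graph is required to be on its short side toward two incident faces at once, and that a $(1,3)$-only expansion graph cannot occur in this position. A secondary subtlety is getting the induction invariant for P-nodes exactly right, since the type of a P-node depends on the \emph{types}, not merely the $(n_e,m_e)$-data, of its children; and, on the necessity side, it is bipartiteness (not just (i) and (ii)) that rules out R-node skeletons --- such as a $K_4$ with two ``$m_e=2n_e-4$'' edges placed on opposite sides --- whose triangular faces would be forced to fill to odd size, which is exactly the role bipartiteness plays in the statement.
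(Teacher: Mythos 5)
Your overall plan is the right one and matches the paper's: root the SPQR-tree at a Q-node, use Lemma~\ref{lem:outer-face-length} to pin down the outer-face length $\ell_e = 4n_e - 2m_e - 4$ of each pertinent graph, and do a bottom-up structural induction establishing the type of an almost $4$-uniform embedding for each node. The observation that condition~(i) forces a P-node with $\ell=4$ to have \emph{only} non-Q children (all with $\ell=4$), and hence to be genuinely $(2,2)$-embeddable rather than $(1,3)$-only, is correct and essential.

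However, the inductive invariant you propose is wrong, and in a way that breaks the argument. You assert Q-nodes give $(1,1)$, S- and R-nodes give $(2,2)$, and P-nodes give $(2,2)$ or $(1,3)$, ``both consistent with $m_P = 2n_P - 4$.'' But the type of $\pert(\mu)$ is governed by $\ell_\mu$, which by $\ell_\mu + \ell_{\text{rest}} = 8$ depends on the parent side and ranges over $\{2,4,6\}$. An S- or R-node whose parent edge has $\ell_{\text{rest}}=2$ (e.g.\ whose parent is a Q-node, or whose $m_e$ satisfies $m_e = 2n_e-3$) has $\ell_\mu = 6$ and type $(3,3)$, not $(2,2)$ — the cube graph already exhibits this. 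Likewise P-nodes realize types $(1,1)$, $(2,2)$, or $(3,3)$ depending on the parent, but \emph{never} $(1,3)$: under the second alternative of~(i) the alternation of Q-node edges with $\ell=6$ edges is forced, so after removing the parent the two boundary paths are equal, and under the first alternative all children are $(2,2)$. Your invariant then fails exactly where you invoke it: at the root you correctly compute $\ell_{\mu'} = 6$ and need a $(3,3)$-embedding of the root's child, but your invariant never produces one. The correct invariant must be parameterized by the class of the parent's expansion graph (Q-node, $m_e = 2n_e-3$, or $m_e = 2n_e-4$), as in the paper's case split.

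Finally, what you identify as ``the main obstacle'' — the feasibility of orienting expansion graphs inside an R-node so that each triangle's length-$2$ contribution comes from the prescribed edge — dissolves once the invariant is fixed. Every skeleton edge of an S- or R-node has $\ell_e \in \{2,4\}$ by condition~(ii), and by your own observation a $\ell_e=4$ P-node in that position is $(2,2)$ rather than $(1,3)$; since $(1,1)$- and $(2,2)$-embeddings are symmetric, no flip matters and the skeleton embedding can be chosen arbitrarily, which is exactly the paper's shortcut. A secondary slip: the $K_4$ example you offer to illustrate the role of bipartiteness is in fact bipartite once the two opposite edges are replaced by $(2,2)$-type gadgets and does satisfy (i) and (ii), so it does not demonstrate independence of the bipartiteness hypothesis.
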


\begin{proof}
  We choose the planar embedding as follows.  For each P-node, if half
  of the neighbors are Q-nodes, then we choose the embedding such that
  Q-nodes and non-Q-nodes alternate.  All remaining embedding choices
  can be done arbitrarily.  We claim that in the resulting embedding
  all faces have size~4.

  As in the proof of Theorem~\ref{thm:3-uniform}, root the SPQR-tree
  $\mathcal T$ of $G$ at an arbitrary edge $e$ and consider the
  embedding as having $e$ incident to the outer face.

  \begin{claim}
    For each node $\mu$ of $\mathcal T$ in the embedding of
    $\pert(\mu)$ without the parent edge denote by $\ell_\mu$ and
    $r_\mu$ the length of the clockwise and counterclockwise path on
    the outer face connecting the poles of $\mu$.

    \begin{compactenum}
    \item Each internal face of $\pert(\mu)$ has size 4.
    \item If $\mu$ is a Q-node or a P-node with Q-node neighbors whose
      parent is not a Q-node, then $\pert(\mu)$ has an almost
      4-uniform embedding of type $(1,1)$.

    \item If $\mu$ is a P-node whose neighbors all satisfy $m_e = 2n_e
      -4$, or $\mu$ is an S- or an R-node whose parent satisfies $m_e
      = 2n_e - 4$, then $\pert(\mu)$ has an almost 4-uniform embedding
      of type $(2,2)$.

    \item If $\mu$ is a P-node with Q-node neighbors whose parent is a
      Q-node, or if $\mu$ is an S- or an R-node whose parent is a
      Q-node or satisfies $m_e = 2n_e - 3$, then $\pert(\mu)$ has an
      almost 4-uniform embedding of type $(3,3)$.
    \end{compactenum}
  \end{claim}

  The proof of the claim is by structural induction on the SPQR-tree.
  Clearly, it holds for the leaves, which are Q-nodes.  Now consider
  an internal node $\mu$.

  If $\mu$ is a P-node, with Q-node neighbors whose parent is not a
  Q-node then, by property (i) all children have almost 4-uniform
  embeddings.  Further, the children that are not Q-nodes satisfy $m_e
  = 2n_e - 4$, and hence, their outer face has size~6 by
  Lemma~\ref{lem:outer-face-length}.  It must hence be an S- or an
  R-node and, by the inductive hypothesis, their embeddings have type
  $(3,3)$.  Thus, the alternation of Q-nodes and these children
  ensures that inner faces have size~4.  Moreover, since the parent is
  not a Q-node, the linear ordering of the children (excluding the
  parent) starts and ends with a Q-node.  Hence the embedding of
  $\pert(\mu)$ has type $(1,1)$.
  
  If $\mu$ is a P-node whose neighbors all satisfy $m_e = 2n_e - 4$,
  then all children have almost 4-uniform embeddings whose outer faces
  have size~4 by Lemma~\ref{lem:outer-face-length}.  Since a
  non-P-node cannot have an embedding of type $(1,x)$ for any value of
  $x$, their embeddings have type $(2,2)$.  This implies the inductive
  hypothesis.

  If $\mu$ is a P-node with Q-node neighbors whose parent is a Q-node,
  then one more than half of its children satisfy $m_e = 2n_e-4$ and
  hence have almost 4-uniform embeddings of type $(3,3)$.  The
  alternation of Q-nodes and these children implies the statement.

  If $\mu$ is an S- or an R-node whose parent satisfies $m_e = 2n_e
  -3$ (or it is a Q-node), the internal faces have size~4 according to
  the inductive hypothesis and property~(ii).  A similar argument
  shows that the embedding has type $(3,3)$.

  If $\mu$ is an S- or an R-node whose parent satisfies $m_e = 2n_e -
  4$, then the two faces incident to the parent edge are triangles,
  and the children all satisfy $m_e = 2n_e - 3$, and hence have almost
  4-uniform embeddings of type $(1,1)$.  Thus the embedding of
  $\pert(\mu)$ has type $(2,2)$.  This finishes the proof of the
  claim, and as it immediately implies that every face has size~4,
  also the proof of the theorem.
\end{proof}

\begin{corollary}
  It can be tested in linear time whether a biconnected planar graph
  admits a 4-regular dual.
\end{corollary}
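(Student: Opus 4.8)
The plan is to turn the characterization of Theorem~\ref{thm:4-uniform} into an explicit linear-time test. First I would dispose of the trivial necessary conditions: by the Euler-formula argument at the start of Section~\ref{sec:regular-duals-small}, a $4$-uniform embedding forces $2m=4f$ with $f=m-n+2$, that is $m=2n-4$, and it forces $G$ to be bipartite; both are checkable in linear time (the latter by a BFS $2$-colouring), and if either fails we reject. Otherwise it remains to verify conditions~(i) and~(ii) of Theorem~\ref{thm:4-uniform}.

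Next I would compute the SPQR-tree $\mathcal T$ of $G$, which has size $O(n)$ and is computable in linear time (Section~\ref{sec:preliminaries}). The point is that conditions~(i) and~(ii) are local to the individual skeletons once we know, for every virtual edge $e$ of every skeleton, the pair $(m_e,n_e)$ giving the numbers of edges and vertices of the expansion graph of $e$ --- equivalently, of the part of $G$ lying on the far side of the corresponding tree edge. Hence the core task is to compute all these pairs.

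To do this I would root $\mathcal T$ at an arbitrary node and perform two sweeps. A bottom-up sweep computes, for every non-root node $\nu$, the pair $(m_\nu,n_\nu)$ of its pertinent graph, where $m_\nu$ is the sum of $m_e$ over the non-parent edges $e$ of $\skel(\nu)$ (a real edge contributing $m_e=1$, $n_e=2$) and $n_\nu=|V(\skel(\nu))|+\sum_e(n_e-2)$ over the same edges, the correction term accounting for the two poles shared by each expansion graph. A top-down sweep then computes, for each tree edge, the counts of the complementary ``towards the root'' expansion graph from the parent skeleton together with the already-known sibling and ancestor values. Each skeleton is visited a constant number of times and the total skeleton size is $O(n)$, so both sweeps run in linear time and yield $(m_e,n_e)$ for every virtual edge.

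Finally, with all pairs available, I would scan the nodes of $\mathcal T$: at each P-node I test the multiset of incident $(m_e,n_e)$ against case~(i) (either every incident edge has $m_e=2n_e-4$, or half the edges have $m_e=2n_e-5$ and the other half are Q-nodes); at each S- or R-node I check that every face of the skeleton has size $3$ or $4$ --- a purely combinatorial check, since the skeleton embedding is essentially fixed --- and that the edges bounding each face obey the relations in case~(ii). Every node is handled in time proportional to its skeleton size, so the scan is linear, and $G$ admits a $4$-uniform embedding precisely when all checks succeed. I expect the only delicate point to be the bookkeeping in the two sweeps: correctly applying the pole correction when combining counts, and confirming that the rooting-independent notion of expansion graph used in Theorem~\ref{thm:4-uniform} is exactly what the sweeps produce. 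This should be routine rather than a genuine obstacle, so the real content of the corollary is the observation that the characterization is ``skeleton-local'' together with the linear-time scheme for the edge and vertex counts.
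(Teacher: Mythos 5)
Your proposal is correct and fills in exactly the argument that the paper leaves implicit: the corollary is stated without proof immediately after Theorem~\ref{thm:4-uniform}, and the intended reasoning is precisely that the characterization is skeleton-local once one knows $(m_e,n_e)$ for every virtual edge, which can be obtained by a standard bottom-up plus top-down (rerooting) sweep over the linear-size SPQR-tree. Your preliminary checks ($m=2n-4$ and bipartiteness), your base case for Q-nodes ($m_e=1$, $n_e=2$), and the pole-correction $n_\nu = |V(\skel(\nu))| + \sum_e(n_e-2)$ are all right; one small simplification is that the top-down values follow directly from the bottom-up ones via $m'=m-m_\mu$ and $n'=n-n_\mu+2$, so no explicit accumulation of sibling counts is needed. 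The per-node checks of conditions (i) and (ii) are indeed constant time per skeleton edge (using that S-node skeletons are cycles and R-node skeletons have a unique embedding up to reflection), giving overall linear time.
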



\subsection{Testing Algorithm for 6-Uniform Embeddings}
\label{sec:6-regular-dual}

\remove{
\begin{sketch}
  To test the existence of a 6-uniform embedding, we again use
  bottom-up traversal of the SPQR-tree and are therefore interested in
  the types of almost 6-uniform embeddings of pertinent graphs.
  Clearly, each of the two boundary paths of a pertinent graph may
  have length at most~5.  Thus, only embeddings of type $(a,b)$ with
  $1 \le a \le b \le 5$ are relevant.  By
  Lemma~\ref{lem:outer-face-length} the value of $a+b$ is fixed and in
  order to admit a $k$-uniform embedding with $k$ even, it is
  necessary that the graph is bipartite.  Thus, for an almost
  6-uniform embedding the length of the outer face must be in
  $\{2,4,6,8,10\}$.  Moreover, the color classes of the poles in the
  bipartite graph determine the parity of $a$ and $b$.
  
  For length~2 and length~10, the types must be $(1,1)$ and $(5,5)$,
  respectively.  For length~4, the type must be $(1,3)$ or $(2,2)$,
  depending on the color classes of the poles.  For length~6, the
  possible types are $(2,4)$ or $(3,3)$ (it can be argued that $(1,5)$
  is not possible).  Finally, for length~8, the possible types are
  $(3,5)$ and $(4,4)$ and again the color classes uniquely determine
  the type.

  Thus, we know for each internal node $\mu$ precisely what must be
  the type of an almost 6-uniform embedding of $\pert(\mu)$ if one
  exists.  It remains to check whether for each node $\mu$, assuming
  that all children admit an almost 6-uniform embedding, it is
  possible to put them together to an almost 6-uniform embedding of
  $\pert(\mu)$.  For this, we need to decide (i) an embedding of
  $\skel(\mu)$ and (ii) for each child the flip of its almost
  $k$-uniform embedding.  The main issue are R-nodes, where we have to
  solve a generalized matching problem to ensure that every face gets
  assigned a total boundary length of~6.  This can be solved in
  $O(n^{1.5})$ time~\cite{g-ertdc-83}.
\end{sketch}
}

  To test the existence of a 6-uniform embedding, we again use
  bottom-up traversal of the SPQR-tree and are therefore interested in
  the types of almost 6-uniform embeddings of pertinent graphs.
  Clearly, each of the two boundary paths of a pertinent graph, may
  have length at most~5.  Thus, only embedding of type $(a,b)$ with $1
  \le a \le b \le 5$ are relevant.  Although, by
  Lemma~\ref{lem:outer-face-length} the value of $a+b$ is fixed, this
  does usually not uniquely determine the values $a$ and $b$ in this
  case.  For example, at first sight it may seem that if the outer
  face of a pertinent graph has length 6, then uniform embeddings of
  type $(1,5)$, $(2,4)$ and $(3,3)$ may all be possible.  However, as
  we will argue in the following, only one of these choices is
  relevant in any situation.

  In order to admit a $k$-uniform embedding with $k$ even, it is
  necessary that the graph is bipartite.  In particular, this implies
  that also the outer face of any pertinent graph must have even
  length.  For a 6-uniform embedding the length of the face must be in
  $\{2,4,6,8,10\}$.  Let us now investigate for each such length the
  possible types of almost 6-uniform embeddings.

  For length~2 and length~10, the types must be $(1,1)$ and $(5,5)$,
  respectively.  For length~4, the type must be $(1,3)$ or $(2,2)$.
  However, the poles of $\skel(\mu)$ are either in the same color
  class of the bipartite graph of $G$, then only $(2,2)$ is possible,
  or they belong to different color classes, then only $(1,3)$ is
  possible.  For length~6, the possible types are $(1,5), (2,4)$ and
  $(3,3)$.  However, type $(1,5)$ implies that one of the paths
  consists of a single edge, i.e., $\mu$ is a P-node.  However, due to
  the path of length~5 on the other boundary, we need another parallel
  edge to achieve faces of size~6.  However, such an edge must be a
  Q-node child of $\mu$, showing that $(1,5)$ cannot occur.  Thus only
  $(2,4)$ and $(3,3)$ are actually possible.  Again, the color class
  of the poles determines the pair uniquely.  Finally, for length~8,
  the possible types are $(3,5)$ and $(4,4)$ and again the color
  classes uniquely determine the type.

  Thus, we know for each internal node $\mu$ precisely what must be
  the type of an almost 6-uniform embedding of $\pert(\mu)$ if one
  exists.  It remains to check whether for each node $\mu$, assuming
  that all children admit an almost 6-uniform embedding of the correct
  type, it is possible to put them together to an almost 6-uniform
  embedding of $\pert(\mu)$ of the correct type.  For this, we need to
  decide (i) an embedding of $\skel(\mu)$ and (ii) for each child
  whether to use the to mirror its almost $k$-uniform embedding.  We
  refer to the latter decision as choosing the flip of the child.

  For S-nodes, which must necessarily have length at most~6, we can
  simply try all ways to choose the flips of the children and see
  whether one of them gives the correct values.

  For a P-node, observe that, in order to obtain an almost 6-uniform
  embedding, the boundary paths of the children must be either all odd
  or all even.  If they are all even, then all pertinent graphs of
  children must have types $(2,2)$, $(2,4)$ or $(4,4)$.  Clearly, the
  children with types $(2,2)$ and $(4,4)$ have to alternate in the
  sequence, the children of type $(2,4)$ can be inserted at an
  arbitrary place.  Let $a$ and $b$ denote the number of children of
  type $(2,2)$ and $(4,4)$, respectively.  It is necessary that $|a-b|
  = 1$, otherwise they cannot alternate.  If $a > b$, then the type of
  $\pert(\mu)$ must be $(2,2)$, if $b<a$, it must be $(4,4)$ and if
  $a=b$, then it must be $(2,4)$.

  The case that all paths are odd is similar but slightly more
  complicated as there are more possible embedding types for the
  children.  The possible types are $(1,1)$, $(3,3)$, $(3,5)$, and
  $(5,5)$ (recall that $(1,3)$ and $(1,5)$ cannot occur in a P-node).
  Again, we call the corresponding virtual edges $(1,1)$-, $(3,3)$-,
  $(3,5)$- and $(5,5)$-edges, respectively.

  We now perform some simple groupings of such virtual edges that can
  be assumed to be placed consecutively in any valid embedding.  We
  view these consecutive edges as a single child whose outer boundary
  paths determine its type of embedding.  First, observe that if there
  is no $(3,5)$-edge but a $(3,3)$-edge, then all children must
  necessarily be of type $(3,3)$.  In this case any embedding of
  $\skel(\mu)$ works and yields an embedding of type $(3,3)$ for
  $\pert(\mu)$.  Otherwise, we group the $(3,5)$-edges together with
  all $(3,3)$-edges into one big chunk, which then represents a child
  of type $(3,5)$.  Thus, we can assume that no $(3,3)$-edge exists.
  Next, observe that the $(3,5)$-edges must occur in pairs whose
  interior face is bounded by two paths of length~3.  Viewed as one
  graph, the type of their embedding is $(5,5)$.  Note that, due to
  the pairing, one $(3,5)$ might be left over.  In this case, we start
  the ordering for the embedding of $\skel(\mu)$ with the
  $(3,5)$-edge.  We then alternatingly insert $(1,1)$-edges and
  $(5,5)$-edges.  If we manage to use up all virtual edges, we have
  found a valid embedding.  Otherwise, since we only followed
  necessary conditions, a valid embedding does not exist.
 
  For an R-node~$\mu$, observe that each face of the skeleton has size
  at least~3.  Thus children whose almost 6-uniform embedding has type
  $(x,5)$ for some value of $x$ immediately exclude the existence of a
  6-uniform embedding for $\pert(\mu)$.  It now remains to choose the
  flips of the almost 6-uniform embeddings of the children.  Note that
  for children whose type $(a,b)$ is such that $a=b$, this choice does
  not matter.  Thus, only the flips of children of types $(1,3)$ and
  $(2,4)$ matter.  We initially consider each face as having a demand
  of~6.  However, for each edge of type $(a,b)$ incident to a face
  $f$, we remove from the demand of face $f$ the amount $\min\{a,b\}$,
  and rather conceptually replace the edge by a $(0,|a-b|)$-edge.  Due
  to the above observation, the only types of edges remaining are
  $(0,0)$ and $(0,2)$.  Clearly, we can ignore the $(0,0)$-edges.  The
  remaining $(0,2)$-edges can pass two units of boundary length into
  one of their incident faces.  We now consider the demand of each
  face.  Clearly, it is necessary that these demands are even.  We
  then model this as a matching problem, where each $(0,2)$-edge has
  capacity~1, and each face has capacity half its demand.  We then
  seek a generalized perfect matching in the incidence graph of faces
  and vertices with positive capacity such that each vertex is matched
  to exactly as many edges as its capacity.  This can be solved in
  $O(n^{1.5})$ time by an algorithm due to Gabow~\cite{g-ertdc-83}.
  Clearly an embedding exists if and only the corresponding matching
  exists. We thus have proved the following
theorem.

\begin{theorem}
  \label{thm:6-regular-dual}
  It can be tested in $O(n^{1.5})$ time whether a biconnected planar
  graph admits a 6-uniform embedding.
\end{theorem}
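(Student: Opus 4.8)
The plan is to recast the structural analysis above as a bottom-up dynamic program over the SPQR-tree $\mathcal{T}$ of the input graph $G$, whose only expensive ingredient is the matching subroutine run at R-nodes. First I would dispatch the parity bookkeeping: since $k=6$ is even, $G$ must be bipartite, which is checkable in linear time and which, via the color classes of the poles, fixes the parity of both boundary-path lengths of every pertinent graph. Combined with Lemma~\ref{lem:outer-face-length}, which pins down the sum $a+b$ of the two boundary-path lengths of an almost $6$-uniform embedding of $\pert(\mu)$, this leaves at most one candidate type $(a,b)$ with $1\le a\le b\le 5$ for each node $\mu$; the one case requiring argument is excluding $(1,5)$, which would force $\mu$ to be a P-node and then force a spurious parallel Q-node child. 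So every node acquires a single target type, and the whole problem reduces to deciding, node by node, whether almost $6$-uniform embeddings of the children (of their prescribed types) can be assembled --- by picking a planar embedding of $\skel(\mu)$ and a flip for each child --- into an almost $6$-uniform embedding of $\pert(\mu)$ of its target type.

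Next I would treat the three node types in turn. For an S-node the pertinent graph has outer face of bounded length, hence a constant number of children, so one simply tries all flip assignments. For a P-node I would first note that in a valid embedding the children's boundary paths are either all even or all odd; in the even case the types are among $(2,2),(2,4),(4,4)$, and placing the $(2,4)$-edges freely while forcing the $(2,2)$- and $(4,4)$-edges to alternate reduces feasibility to the count condition $|a-b|=1$ (with the target type read off from the majority), all in linear time. The odd case is handled by the same idea after grouping virtual edges that may be taken consecutive: a block of $(3,3)$- and $(3,5)$-edges behaves like a single $(3,5)$-edge, pairs of $(3,5)$-edges behave like a $(5,5)$-edge, and then $(1,1)$- and $(5,5)$-edges must alternate, with a possible leftover $(3,5)$-edge placed first. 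For an R-node the embedding of $\skel(\mu)$ is unique up to a flip and choice of outer face, so only the flips of the children of type $(1,3)$ or $(2,4)$ are undetermined; I would assign every skeleton face a demand of~$6$, subtract from each incident face the guaranteed contribution $\min\{a,b\}$ of each incident child (replacing that child conceptually by a $(0,|a-b|)$-edge, so that only $(0,0)$- and $(0,2)$-edges remain), check that all residual demands are even, and model the remaining choice as a degree-constrained matching in the bipartite incidence graph between $(0,2)$-edges (capacity~$1$) and faces (capacity half their residual demand). By Gabow's algorithm~\cite{g-ertdc-83} this is solvable in $O(n^{1.5})$ time, and since only necessary conditions were imposed, such a matching exists iff the required flips exist.

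Finally I would stitch everything together: traverse $\mathcal{T}$ bottom-up, at each node testing feasibility and recording its unique target type; $G$ admits a $6$-uniform embedding iff the child of the root succeeds with the type forced by the two faces incident to the root edge having size~$6$. Every step is linear except the R-node matchings, whose incidence graphs have total size $O(n)$, so the overall running time is $O(n^{1.5})$.

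The main obstacle, as usual in such dynamic programs, is the R-node case --- not the matching algorithm itself, but the two-sided correctness argument: that the per-face demand-balancing encoded by the matching is necessary, and, more delicately, that any feasible degree-constrained matching actually yields a planar embedding of $\pert(\mu)$ realising the target type. The P-node groupings need similar (though easier) care, since one must verify that placing $(2,4)$- and $(3,5)$-edges in the claimed canonical positions never sacrifices a valid embedding.
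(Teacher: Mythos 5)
Your proposal matches the paper's proof essentially step for step: the same parity and outer-face-length reasoning to pin down a unique target type per node (including the same argument for excluding $(1,5)$), the same brute-force treatment of S-nodes, the same even/odd alternation and grouping scheme for P-nodes, and the same reduction of the R-node flip choices to a degree-constrained matching solved by Gabow's algorithm in $O(n^{1.5})$ time. The concerns you raise at the end about the two-sided correctness of the R-node matching and the P-node groupings are legitimate points that the paper also addresses only briefly, so nothing essential is missing from your outline.
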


\subsection{Uniform Embeddings with Large Faces}

We prove NP-hardness for testing the existence of a
$k$-uniform embedding for $k = 7$ and $k \ge 9$ by giving a reduction
from the NP-complete problem {\sc Planar Positive 1-in-3-SAT} where
each variable occurs at least twice and at most three times and each
clause has size two or three. The NP-completeness of this version of
satisfiability follows from the results of Moore and Robson~\cite{mr-htpst-01}, as shown by the following Theorem.

\newcommand{\thmpponethreesattext}{{\sc Planar Positive 1-in-3-SAT} is
  NP-complete even if each variable occurs two or three times and each
  clause has size two or three.}

  \begin{theorem}
\label{thm:pp1in3sat}
\thmpponethreesattext
  \end{theorem}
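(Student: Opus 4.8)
The plan is a reduction from the unrestricted \textsc{Planar Positive 1-in-3-SAT} problem, which is NP-complete by Moore and Robson~\cite{mr-htpst-01}; in that version every clause has size exactly three, but a variable may occur arbitrarily often. Membership of the restricted variant in NP is immediate, since one can guess an assignment and verify in linear time that every clause has exactly one true literal. The engine of the reduction is a \emph{planar equality gadget}: for two variables $u,v$ and a fresh variable $z$, the two size-two clauses $\{u,z\}$ and $\{v,z\}$, read as positive 1-in-3 clauses, express $z = \lnot u$ and $z = \lnot v$, hence together they force $u = v$. The fresh variable $z$ occurs exactly twice, both clauses are positive and of size two, and the incidence graph of the gadget is merely the path $u - \{u,z\} - z - \{v,z\} - v$, which can be spliced into any planar drawing.

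First I would fix a planar embedding of the variable--clause incidence graph of the input formula $\varphi$, and, as a cheap normalization, pad every variable that occurs only once: attaching a short cycle of two equality gadgets linking such a variable to a fresh, otherwise unconstrained variable makes the original variable occur three times while the new variables occur exactly twice, and does not change satisfiability. Then, for every variable $x$ of degree $d \ge 4$, let $C_{1},\dots,C_{d}$ be the clauses containing $x$ in the rotation around $x$; I replace $x$ by $d$ fresh copies $x_{1},\dots,x_{d}$, reroute the edge to $C_{j}$ to end at $x_{j}$ (keeping the cyclic order, so planarity is preserved), and insert an equality gadget between each consecutive pair $x_{j},x_{j+1}$, all drawn inside a small disk around the former location of $x$. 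The resulting chain forces $x_{1}=\dots=x_{d}$. A routine count shows that each interior copy $x_{j}$ ($1<j<d$) occurs three times (once in $C_{j}$, once in each neighbouring gadget), the two extreme copies occur twice, every auxiliary variable occurs twice, and all new clauses have size two; the original clauses keep size three. Carrying this out for all high-degree variables yields, in linear time, a planar positive 1-in-3 formula $\varphi'$ in which every variable occurs two or three times and every clause has size two or three.

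For correctness, a satisfying assignment of $\varphi$ extends to $\varphi'$ by giving each copy $x_{j}$ the value of $x$ and each gadget variable its forced complementary value; conversely, in any satisfying assignment of $\varphi'$ the equality chains force all copies of a variable to agree, so the induced assignment on the original variables satisfies $\varphi$. Hence $\varphi$ is satisfiable if and only if $\varphi'$ is, which completes the reduction.

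I expect the only genuinely delicate point to be the planarity bookkeeping in the variable-splitting step: one must check that the chain of equality gadgets together with the $d$ edges leaving to $C_{1},\dots,C_{d}$ can be laid out inside the disk vacated by $x$ while respecting the prescribed rotation. This is straightforward but should be spelled out. The rest --- verifying the 1-in-3 semantics of the gadget, the occurrence counts, and the forward/backward correctness --- is elementary.
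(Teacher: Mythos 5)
There is a genuine gap, and it is at the very first step: the hardness result you attribute to Moore and Robson~\cite{mr-htpst-01} is not what they proved. Their result is the NP-completeness of \emph{Cubic Planar Monotone} 1-in-3-SAT, i.e., every variable occurs \emph{exactly three} times, but clauses are only \emph{monotone} (each clause is either all-positive or all-negative), not positive. You describe the opposite situation --- all clauses positive but variable occurrences unbounded --- so your reduction starts from a problem that your cited source does not cover. Consequently your construction spends all of its effort on a non-issue for this base problem (bounding the number of occurrences of each variable, which Moore--Robson already guarantee) while never addressing the actual obstacle, namely the all-negative clauses that the monotone instance may contain.

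The irony is that the tool you built is exactly the right one, just pointed at the wrong target. Your observation that in 1-in-3 semantics a size-two clause $\{u,z\}$ enforces $z=\lnot u$ is precisely the gadget the paper uses: given a clause $C=(\lnot x,y,z)$, introduce a fresh variable $x'$, replace $C$ by $(x',y,z)$, and add the two-literal clause $(x,x')$ forcing $x'=\lnot x$; in the incidence graph this is a double subdivision of the edge $xC$, so planarity is preserved, $x$ keeps degree three, $x'$ has degree two, and one negated occurrence has been eliminated. Iterating this removes all negations while maintaining the degree and clause-size bounds, which is the whole proof. Your variable-splitting chain of equality gadgets is correct as far as it goes, and would be a perfectly reasonable normalization if you started from a source that establishes hardness of Planar Positive 1-in-3-SAT with unbounded occurrences (e.g.\ Mulzer--Rote); but as written, with the Moore--Robson citation, the proof does not go through.
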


\begin{proof}
  Clearly the problem is in NP.  For the hardness proof, we reduce
  from the NP-complete problem {\sc Cubic Planar Monotone 1-in-3-SAT},
  a variant of planar 3-SAT where each variable occurs three times and
  each clause consists of three literals that are either all positive
  or all negative~\cite{mr-htpst-01}.  We denote 1-in-3 clauses as
  $(x,y,z)$ (or $(x,y)$ for clauses of size two) where $x,y,z$ are
  literals.

  Consider a planar embedding of the variable--clause graph and a
  clause $C = (\neg x, y, z)$ where a variable $x$ occurs negated.  We
  now replace $C$ by two clauses $C' = (x',y,z)$ and $C'' = (x',x)$,
  where $x'$ is a new variable.  Observe that, in the variable--clause
  graph this corresponds to subdividing the edge $xC$ twice.  Thus,
  the resulting variable--clause graph remains planar.  Further, the
  clause $C''$ ensures that, in any satisfying 1-in-3 truth
  assignment, the variables $x$ and $x'$ have complementary truth
  values, i.e., $x'$ is the negation of $x$.  Thus the resulting
  instance of {\sc Planar Positive 1-in-3-SAT} is equivalent to the
  original one.  Moreover, the new instance has one fewer negated
  literal.  After $O(n)$ such operations, we obtain an equivalent
  instance where all literals are positive.  Obviously the resulting
  formula satisfies the claimed properties and the reduction can be
  performed in polynomial time.
\end{proof}

\remove{
\newcommand{\thmregularhardtext}{$k$-\regdual is NP-complete for
  all odd $k \ge 7$ and even $k \ge 10$.}

\begin{theorem}
  \label{thm:regular-hard}
  \thmregularhardtext
\end{theorem}

\begin{sketch}
  We sketch the reduction for $k=7$, the other cases are similar.  We
  reduce from {\sc Planar Positive 1-in-3-SAT} where each variable
  occurs two or three times and each clause has size two or three.
  Essentially, we perform a standard reduction, replacing each
  variable, each clause, and each edge of the variable--clause graph
  by a corresponding gadget, similar to the proof of
  Theorem~\ref{thm:5-maxface-hardness}.

  First, it is possible to construct subgraphs that behave like
  $(1,2)$-, $(1,3)$-, and $(1,4)$-edges, i.e., their embedding is
  unique up to a flip, the inner faces have size~7 and the outer face
  has a path of length~1 and a path of length 2, 3 or 4 between the
  poles; see Fig.~\ref{fig:uniform-gadgets}a for an example.

  A variable is a cycle consisting of $(1,2)$-edges, called
  \emph{output edges} (one for each occurrence of the variable) and
  one \emph{sink-edge}, which is a $(1,3)$- or a $(1,4)$-edges
  depending on whether the variable occurs two or three times; see
  Fig.~\ref{fig:uniform-gadgets}b.  It is not hard to construct
  clauses with two or three incident $(1,2)$-edges whose internal face
  has size~7 if and only if exactly one of the incident $(1,2)$-edges
  contributes a path of length~1 to the internal face.  We then use
  simple pipes to transmit the information encoded in the output edges
  to the clauses in a planar way.  The main issue are the sink-edges,
  which have different length depending on whether the corresponding
  variable is \true or \false.  To this end, we transfer the information
  encoded in all sink edges via pipes to a single face.  We use the
  fact that sink-edges are $(1,k)$-edges with $k>2$ to cross over
  pipes transmitting these values using the crossing gadgets shown in
  Fig.~\ref{fig:uniform-gadgets}c,d.  Note that the construction in
  Fig.~\ref{fig:uniform-gadgets}d is necessary since crossing a pipe
  of $(1,4)$-edges with a pipe of $(1,2)$-edges in the style of
  Fig.~\ref{fig:uniform-gadgets}c would require face size at least~8.

  \begin{figure}[tb]
    \centering
        \centering
      \includegraphics[page=1]{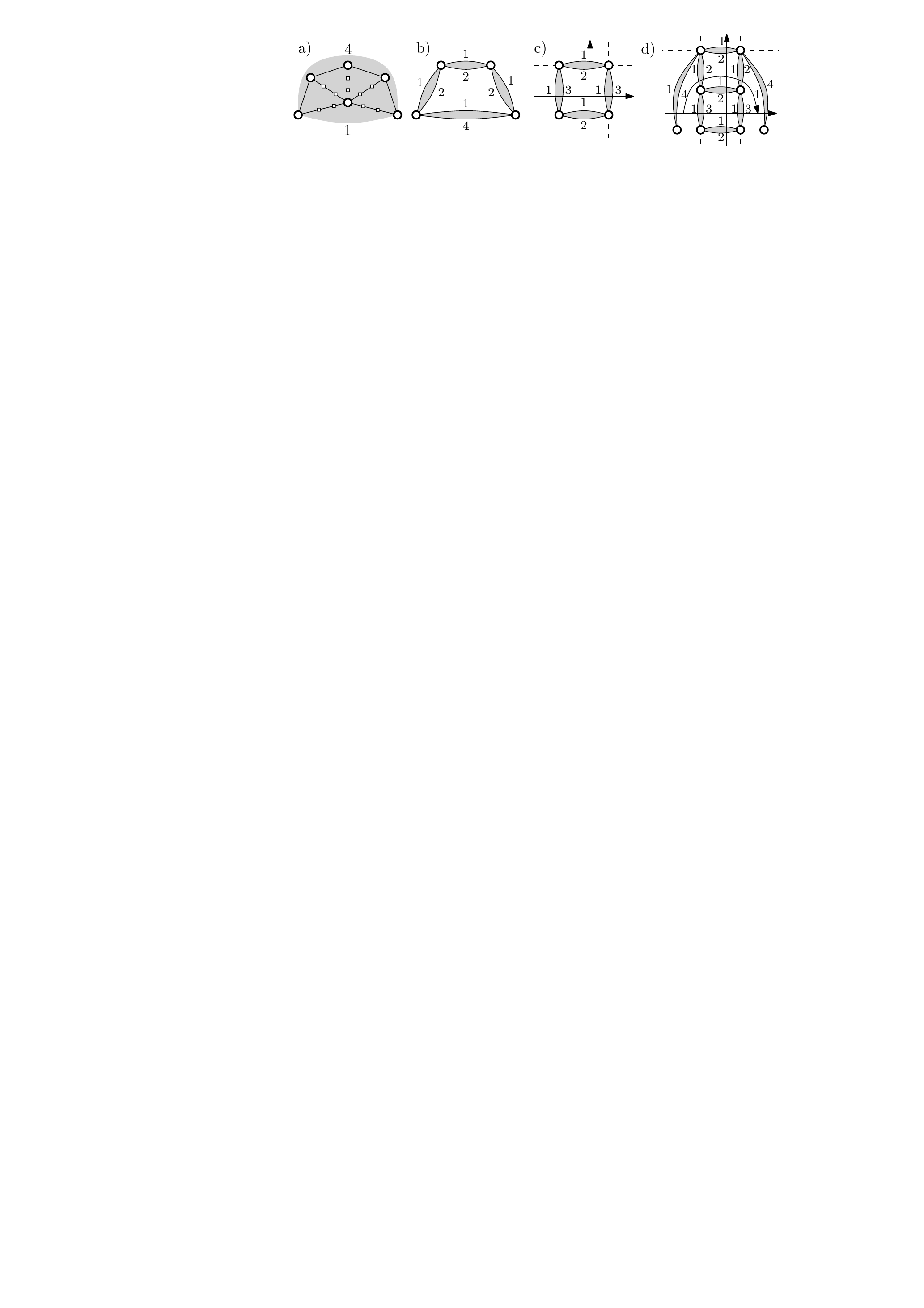}
      \caption{Illustration of the gadgets used for the hardness proof
        in Theorem~\ref{thm:regular-hard}. (a) A $(1,4)$-edge. (b) A
        variable gadget for a variable with three occurrences. (c),
        (d) Crossing gadgets for a pipe of $(1,2)$-edges with a pipe
        of $(1,3)$- and $(1,4)$-edges, respectively.  The red arrows
        indicate the information flow.}
      \label{fig:uniform-gadgets}
  \end{figure}
  
  Now we have collected all the information encoded in the sink edges
  in a single face.  By attaching variable gadgets to each of the
  corresponding pipes, we split this information into $(1,2)$-edges,
  whose endpoints we identify such that they all form a single large
  cycle~$C$.

  Now, for all faces except for the inner faces of the gadgets and the
  face inside cycle $C$, we apply the following simple procedure.  We
  triangulate them and insert into each triangle a new vertex
  connected to all its vertices by edges subdivided sufficiently often
  so that all faces have size~7.  As a result the only remaining
  embedding choices are the flips of the $(1,d)$-edges used in the
  gadgets.  We have that the original 1-in-3SAT formula is satisfiable
  if and only if the $(1,d)$-edges can be flipped so that all faces
  except the one inside $C$ have size~7.

  \begin{figure}[tb]
    \centering
    \includegraphics[page=6]{fig/gadgets-very-short-bw}
    \caption{Illustration of a shift ring (the left and right dark
      gray edges are identified) that allows to transpose adjacent
      $(1,2)$-edges encoding different states.  The read arrows show
      the flow of information encoded in the $(1,2)$-edges.}
    \label{fig:shiftring}
  \end{figure}

  It follows from Lemma~\ref{lem:outer-face-length} that the length of
  the face inside~$C$ is uniquely determined if all other faces have
  size~7, but we do not know which of the $(1,2)$-edges contribute
  paths of length~1 and which contribute paths of length~2.  It then
  remains to give a construction that can subdivide the interior of
  $C$ into faces of size~7 for any possible distribution.  This is
  achieved by inserting ring-like structures that allow to shift and
  transpose adjacent edges that contribute paths of length~1 and
  length~2; see Fig.~\ref{fig:shiftring}.  By nesting sufficiently
  many such rings, we can ensure that in the innermost face the edges
  contributing paths of length~1 are consecutive, and the first one
  (in some orientation of $C$) is at a fixed position.  Then we can
  assume that we know exactly what the inner face looks like and we
  can use one of the previous constructions to subdivide it into faces
  of size~7.
\end{sketch}

\begin{proof}
  The proof is split into two parts.
  Theorem~\ref{thm:odd-regular-hard} shows that case for odd $k \ge
  7$.  The case of even $k \ge 10$ is shown in
  Theorem~\ref{thm:even-regular-hard}.
\end{proof}
}

  \begin{theorem}
    \label{thm:odd-regular-hard}
    $k$-\regdual is NP-complete for all odd $k \ge 7$.
  \end{theorem}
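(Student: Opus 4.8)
Membership in NP is immediate: guess a combinatorial embedding and verify in polynomial time that every face has size~$k$. For hardness I plan a gadget reduction from {\sc Planar Positive 1-in-3-SAT} in which every variable occurs two or three times and every clause has size two or three, which is NP-complete by Theorem~\ref{thm:pp1in3sat}. I will first give the reduction for $k=7$; for larger odd~$k$ it suffices to lengthen every cycle of the construction by $k-7$ edges, exactly as in the proof of Theorem~\ref{thm:5-maxface-hardness}, and all arguments carry over. The basic building block is a \emph{$(1,d)$-edge}: a two-pole subgraph whose embedding is unique up to a flip, all of whose inner faces have size~$k$, and whose outer boundary consists of a path of length~$1$ and a path of length~$d$ between the two poles. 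I would realise these for $d\in\{2,3,4\}$ by gluing short cycles of length~$k$ along shared edges so that the flip decides which of the two faces incident to the block receives the length-$1$ path; a $(1,2)$-edge thus carries one bit of information into each of the (at most two) faces on its boundary.

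Next come the combinatorial gadgets, obtained by plugging $(1,d)$-edges into the plane variable--clause graph of the input formula. A \emph{variable gadget} is a cycle whose edges are $(1,2)$-edges --- one \emph{output edge} per occurrence of the variable --- together with a single \emph{sink-edge}, which is a $(1,3)$-edge if the variable occurs twice and a $(1,4)$-edge if it occurs three times; the flip of the whole cycle fixes the truth value broadcast by all output edges, while the sink-edge absorbs the leftover length so that the face enclosed by the cycle has size exactly~$k$. A \emph{clause gadget} is a short cycle carrying two or three incident $(1,2)$-edges, designed so that its interior face has size~$k$ if and only if exactly one of these edges contributes its length-$1$ side, which is precisely the $1$-in-$3$ condition. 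Output edges are joined to the clauses containing them by \emph{pipes} (chains of $(1,2)$-edges laid out according to the given planar embedding of the variable--clause graph), so that the construction remains planar.

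The one part that goes beyond a routine SAT reduction --- and hence the main obstacle --- is that a sink-edge has two possible lengths, so the face that absorbs the leftover boundary of the sink-edges is not pinned down in advance. I would deal with this in three stages. (i)~Route every sink-edge into one common face through pipes; since pipes may need to cross, I use the extra room afforded by $d>2$ on a sink-edge to build crossing gadgets, with a slightly larger variant for crossing a $(1,4)$-pipe with a $(1,2)$-pipe so that no face exceeds size~$k$. (ii)~Attach fresh variable gadgets to the collected pipes, re-encoding the accumulated information along a single long cycle~$C$ of $(1,2)$-edges; then triangulate every remaining face and subdivide each new edge enough times that, outside the gadget interiors and the disk bounded by~$C$, all faces have size~$k$. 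Now the only embedding freedom left is the flips of the $(1,d)$-edges, and by Lemma~\ref{lem:outer-face-length} the interior face of~$C$ has length determined solely by the vertex and edge counts of that region --- but we do not control how the length-$1$ and length-$2$ contributions are distributed around~$C$. (iii)~To finish, I would design a \emph{shift ring}: a $k$-uniformly fillable annulus placed inside~$C$ that transposes two adjacent contributions of different lengths; nesting enough shift rings collects all length-$1$ contributions into one consecutive block anchored at a fixed position of the innermost boundary, after which a single fixed subgraph tiles the remaining disk into faces of size~$k$. Making the shift ring and this final tiling work simultaneously for every odd $k\ge 7$, while keeping the whole family of gadgets planar and of polynomial size, is the delicate technical point.

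It remains to check correctness of the construction $G_\varphi$ and its polynomiality. If $\varphi$ has a satisfying $1$-in-$3$ assignment, flipping each variable cycle to the corresponding truth value makes every variable, clause, pipe and crossing face have size~$k$, and the nested shift rings together with the final tiling fill the interior of~$C$, yielding a $k$-uniform embedding. Conversely, from any $k$-uniform embedding of $G_\varphi$ we read off a truth assignment from the flips of the variable cycles; the size-$k$ constraint on each clause face forces exactly one incident output edge to deliver a length-$1$ path, so the assignment satisfies $\varphi$ in the $1$-in-$3$ sense. Since $G_\varphi$ has polynomial size and is constructible in polynomial time, this shows that $k$-\regdual is NP-complete for every odd $k\ge 7$.
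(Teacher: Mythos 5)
Your plan matches the paper's proof in all its structural choices: reducing from {\sc Planar Positive 1-in-3-SAT} with bounded occurrences, building $(1,d)$-edges from subdivided wheels, variable gadgets with output $(1,2)$-edges and a $(1,3)$- or $(1,4)$-sink-edge, clause gadgets enforcing the 1-in-3 condition, pipes following the planar variable--clause layout, routing all sink-edges to a common face, and nested shift rings to normalize the unknown distribution of boundary contributions before tiling the innermost disk. So the overall approach is the same.

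There is one concrete step you gloss over that is in fact the crux for $k=7$: crossing a pipe carrying values in $\{1,4\}$ with one carrying values in $\{1,2\}$. You write that you would use ``a slightly larger variant'' of the $(1,3)\times(1,2)$ crossing, but that variant has an inner face of size $1+2+4+1=8$, which already exceeds $k=7$; so a direct enlargement does not work precisely in the base case you are reducing from. The paper's way around this is to first split the $\{1,4\}$-flow on the sink-pipe into a $\{1,2\}$-part and a $\{1,3\}$-part (a small face in which a $(1,4)$-edge, a $(1,2)$-edge and a $(1,3)$-edge meet), cross the $\{1,3\}$-part with the ordinary crossing gadget, cross the $\{1,2\}$-part using a weaker ``flow switch'' (a cycle with four $(1,2)$-edges that only constrains how many short sides point inward, not which ones), and then re-merge into a $(1,4)$-edge on the far side; the symmetry of the merge is what forces the flow switch to have transmitted the information correctly. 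You need to spell out this split--cross--merge construction, or find an equivalent, to close the argument. A smaller issue: claiming that larger odd $k$ is handled ``by lengthening every cycle by $k-7$ edges, exactly as in Theorem~\ref{thm:5-maxface-hardness}'' is not automatically safe here because every face must have size exactly $k$, not at most $k$; the paper instead parametrizes the gadgets in $k$ from the outset (spokes subdivided $(k-1)/2$ times, gadget cycles of length $k-d$, etc.), and you should make sure your uniform-lengthening recipe reproduces that consistently.
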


\begin{proof}
  We reduce from {\sc Planar Positive 1-in-3-SAT} where each variable
  occurs two or three times and each clause has size two or three,
  which is NP-complete by Theorem~\ref{thm:pp1in3sat}.  Let $\varphi$
  be such a formula with $n$ variables, $C$ clauses and $L$ literals
  (total number of literals in all clauses), and let $G_\varphi$ be
  its variable--clause graph embedded in the plane.  We add an
  additional vertex $s$, which we call \emph{sink} into the outer face
  and connect each variable to the sink in such a way that no two
  edges incident to $s$ cross.  Call this augmented graph
  $G_\varphi'$.  Note that, due to the crossings, edges may be
  subdivided into several pieces, which we call \emph{arcs}.

  In the following we will construct gadgets modeling a flow-like
  problem on $G_\varphi'$.  Each variable has $2k-1$ units of flow,
  where $d$ is the degree in $G_\varphi'$.  It sends one unit of flow
  into each incident edge.  For the remaining units of flow, it takes
  a decision.  Either it sends the remaining flow to the sink (value
  \false), or it evenly distributes it to all incident edges leading
  to a clause (value \true).  We then construct gadgets for the arcs,
  which simply pass on the information from one end to the other and
  crossing gadgets, which pass the information over crossings.  Here
  it is crucial that the crossover happens between information flows
  of different sizes.  The only crossings happen between
  variable--clause connections, which carry either one or two units of
  flow and variable--sink connections, which carry either one or
  three/four units of flow (depending on the degree of the variable).
  Since our construction is such that flows cannot be split, this
  allows to cross over these information flows.  The clauses gadgets
  are constructed such that there is a face that has size $d$ if and
  only if it receives as incoming flow the number of incident
  variables plus one, which models the fact that precisely one of them
  must be assigned the truth value \true.

  Next, we observe that for a satisfying truth assignment, there are
  precisely $C$ satisfied literals and $L-C$ unsatisfied literals in
  the formula.  Each satisfied literal ensures that only one unit is
  sent towards the sink, whereas each unsatisfied literal ensures that
  two units are sent towards the sink.  Thus, there are precisely $C +
  2(L-C) = 2L - C$ units of flow sent to $s$ via $n$ edges.  We design
  a gadget that admits an embedding where every face has size~$d$ no
  matter how the incoming flow is distributed to the edges incident to
  $s$, we call this the \emph{sink gadget}.  Let now $H_\varphi$
  denote the graph obtained from $G_\varphi$ by replacing each
  variable, arc, crossing, clause, and the sink by a corresponding
  gadget.  To ensure that the embedding of $H_\varphi$ follows the
  embedding of $G_\varphi$, we triangulate each face of $H_\varphi$
  that corresponds to a face of $G_\varphi$ and then insert into each
  triangle a construction that ensures that each of the internal faces
  has size $d$.  This fixes the planar embedding of $H_\varphi$ except
  for the decisions that are modeled by the gadgets.  It is then clear
  that $H_\varphi$ admits a planar embedding if an only if $\varphi$
  is satisfiable.

  We now give a more detailed overview of the construction.  The basic
  tool for passing information are wheels whose outer cycle has $d$
  vertices for $d=3,4,5$, and whose inner edges are subdivided
  $(k-1)/2$ times such that all inner faces have size $k$; see
  Fig.~\ref{fig:gadgets-edges}.  Note that this is possible since $k$
  is odd.  We then designate two adjacent vertices of the outer cycle
  as \emph{poles} $u$ and $v$, where it attaches to the rest of the
  graph.  The flip of this gadget then decides with of the two face
  incident to its outside is incident to a path of length~1 and which
  is incident to a path of length~$d-1$.  We call these two paths the
  \emph{boundary paths}.  In this respect, and since there internal
  faces always have size $k$, and hence are not relevant, these
  constructions behave like a single edge where one side has length~1
  and the other one has length $d-1$.  We therefore call them
  \emph{$(1,2)$-, $(1,3)$- and $(1,4)$-edges}, respectively.  We use
  them to model the flows from the above description.

  \begin{figure}[tb]
    \centering
    \begin{subfigure}[b]{.4\textwidth}
      \centering
      \includegraphics[page=1]{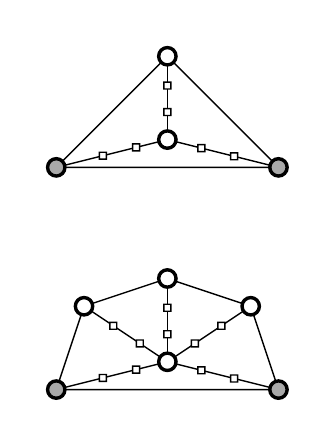}
      \caption{}
      \label{fig:gadgets-edges}
    \end{subfigure}
    \begin{subfigure}[b]{.4\textwidth}
      \centering
      \includegraphics[page=2]{fig/gadgets-short}
      \caption{}
      \label{fig:gadgets-variables}
    \end{subfigure}

    \caption{Illustration of the gadgets for the proof of
      Theorem~\ref{thm:odd-regular-hard} in the case $d=7$. (a)
      $(1,2)$-edge and $(1,4)$-edge; the poles are shaded, the
      subdivision vertices are small squares. (b) Variable gadgets for
      variables occurring two (above) and three times (below),
      respectively.  The $(1,k)$-edges are represented thick and the
      numbers give the lengths of the respective boundary paths.  Note
      that simultaneously exchanging the numbers at each edge also
      gives an embedding where the inner face has size $d$, and these
      are the only two such choices.  The edge at the bottom is the
      sink edge, the $(1,2)$-edges are the output edges.}
  \end{figure}
  
\begin{figure}[tb]
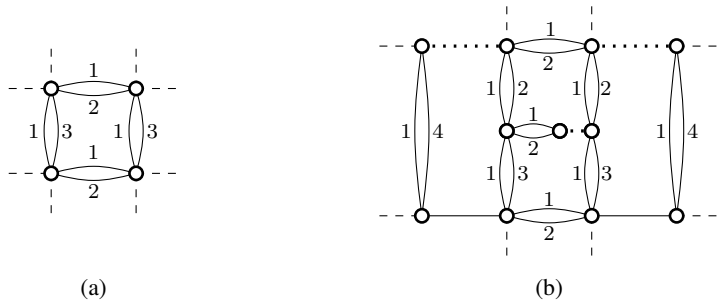

  \centering
    \begin{subfigure}[b]{.49\textwidth}
      \centering
      \includegraphics[page=3]{fig/gadgets-short}
      \caption{}
      \label{fig:gadgets-simple-crossing}
    \end{subfigure}
    \begin{subfigure}[b]{.49\textwidth}
      \centering
      \includegraphics[page=4]{fig/gadgets-short}
      \caption{}
      \label{fig:gadgets-complex-crossing}
    \end{subfigure}

    \caption{Illustration of the crossing gadgets for the proof of
      Theorem~\ref{thm:odd-regular-hard} in the case $d=7$. (a) A
      crossing gadget for two edges, one carrying values in $\{1,2\}$
      and one in $\{1,3\}$.  Observe that simultaneously exchanging
      the numbers at opposite edges results again in an inner face of
      size $d$, and this can be done independently for both pairs. (b)
      A crossing gadget for two edges, one carrying values in
      $\{1,2\}$ and the other in $\{1,4\}$.  The dashed lines show
      where further gadgets attach, the length of the dotted paths
      depends on $d$, for $d=7$ their length would be~0.}
  \end{figure}
  
\begin{figure}
  \centering
   \includegraphics[page=6]{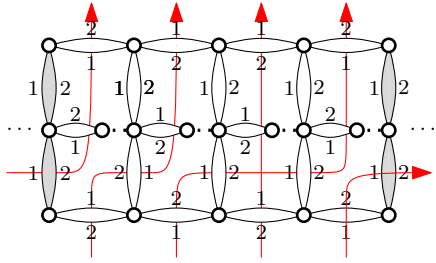}
   \caption{Illustration of a shift ring consisting of several flow
     switches.  The gray edges on the left and right boundary are
     identified.  The red arrows illustrate the flow of information,
     where the states of two $(1,2)$-edges carrying different
     information is transposed in the circular ordering compared to
     the inner face of the shift ring (above the construction) and the
     outer face of the shift ring(below the construction).}
      \label{fig:gadgets-shift-ring}
  \end{figure}

  We are now ready to describe our gadgets.  A variable of degree $d$
  in $G_\varphi'$ (recall that $d=2$ or $d=3$), the gadget is a cycle
  of length~$k-d$ such that $d$ edges are $(1,2)$-edges (the
  \emph{output edges}) and one is an $(1,d-1)$-edge (the \emph{sink
    edge}); see Fig.~\ref{fig:gadgets-variables} for variable gadgets
  for $k=7$.  Clearly, for the inner face $f$ to have size $k$, either
  all $(1,2)$-edges must be embedded such that their boundary paths of
  length~2 are incident to it and the $(1,d-1)$-edge must be embedded
  such that its boundary path of length~1 is incident to $f$, or all
  $(1,2)$-edges and the $(1,k-1)$-edge must be flipped.  This
  precisely models the flows emanated by a variable as described
  above.

  We use \emph{pipe gadgets} to transport flow along an arc.  By
  construction, each arc transports flows from exactly one of the
  three sets $\{1,2\}$, $\{1,3\}$ and $\{1,4\}$.  We give separate
  pipes for them.  Let the set of flow values be $\{1,d\}$.  The
  gadget is a cycle of length~$k-d+1$ where two nonadjacent edges are
  $(1,d)$-edges, one \emph{input} and one \emph{output} edge.
  Clearly, there are $k-d-1$ edges contributing length~1 to the inner
  face of the gadgets.  Thus, the two $(1,d)$-edges must together
  contribute paths of length~$d+1$, which occurs if and only if the
  information encoded by the input edge is transferred to the output
  edge.

  For a clause of degree $d$ in $G_\varphi$ (recall that $d=2$ or
  $d=3$), the gadget is a cycle of length~$k-1$ where $d$ edges are
  $(1,2)$-edges.  Obviously, the inner face $f$ has size $k$ if and
  only if precisely one of the $(1,2)$-edges has its boundary path of
  length~2 incident to $f$.  Thus, the gadget correctly models a
  1-in-3-SAT clause.

  For a crossing, one of the two edges transports values in $\{1,2\}$
  and the other values in $\{1,3\}$ or in $\{1,4\}$.  If it is
  $\{1,3\}$, we simply use a cycle of length~$4$, where two opposite
  edges are $(1,2)$-edges and the other two opposite edges are
  $(1,3)$-edges; see Fig.~\ref{fig:gadgets-simple-crossing}.  From
  each pair of opposite edges, we designate one as the input edge and
  one as the output edge.  It is not hard to see that the inner face
  has size $k$ if and only if the state from each input edge is
  correctly transferred to the output edge.  Of course, the same
  approach could be used for the case $\{1,4\}$, however, this would
  require $k \ge 8$.  Instead, we use a different approach; see
  Fig.~\ref{fig:gadgets-complex-crossing} for an illustration.  First,
  we split the information into two separate pieces, one that
  transmits a value in $\{1,2\}$ and one that transmits a value in
  $\{1,3\}$ (note that the sum of the differences between the upper
  and the lower values remains constant).  Then we cross over the part
  of the sink edge carrying the flow in $\{1,3\}$ as before.  To cross
  the part of the sink edge carrying flow in $\{1,2\}$ with the
  variable--clause arc, we use a gadget we call \emph{flow switch}.
  It consists of a cycle of length $k-2$ where four edges are
  $(1,2)$-edges, and for each pair of opposite $(1,2)$-edges one is
  declared the input and one is the output.  Note that, unlike the
  above crossing gadget, this does not necessarily transfer the input
  information to the correct output edge.  It only requires that half
  of the $(1,2)$-edges have a boundary path of length~2 in the inner
  face.  However, the fact that, afterwards, we use a symmetric
  construction as for splitting the flow in $\{1,4\}$ to merge the two
  flows on the sink edges after the crossing back into a flow in
  $\{1,4\}$ enforces this behavior.

  We can now construct a graph $H_\varphi'$ by replacing each variable
  by a variable gadget, each clause by a clause gadget, each crossing
  by a crossing gadget and each arc by a corresponding pipe gadget.
  The gadgets are joined to each other by identifying corresponding
  input and output edges of the gadgets (i.e., we identify the
  corresponding construction), taking into account the embedding of
  $G_\varphi'$.  For the sink, we first attach to each of the output
  edges of the pipe gadgets leading there, a corresponding variable
  gadget via its sink edge to split the flow arriving there into
  $(1,2)$-edges.  We identify the endpoints of these $(1,2)$-edges
  such that they form a simple cycle whose interior faces represents
  the sink.

  We now arbitrarily triangulate, possibly by inserting vertices, all
  faces corresponding to a face of $G_\varphi'$ that are not internal
  faces of a gadget and insert into each of the resulting triangles a
  vertex connected to each triangle vertex by a path of
  length~$(k-1)/2$.  This ensures that all resulting subfaces of the
  triangles have size $k$ and at the same time the embedding of
  $H_\varphi'$ is fixed except for the flips of the $(1,d)$-edges.
  Using the above arguments, it is not hard to see that $\varphi$
  admits a satisfying 1-in-3 truth assignment if and only if
  $H_\varphi$ admits a planar embedding where each face has size $k$
  except for the face representing the sink vertex, which is bounded
  by $L$ $(1,2)$-edges and has size $2L-C$.  To complete the proof, we
  present a construction for the interior of the sink that always
  allows an embedding where all inner faces have size $k$ as long as
  the outer face has size $2L-C$, i.e., $C$ edges have a boundary path
  of length~1 at the inner face, and the remaining $L-C$ have a
  boundary path of length~2 at the inner face.

  This works in two steps.  First, we build a \emph{shift ring}, which
  allows to shift the information encoded in a subset of the edges by
  one unit to the left or the right.  The shift ring consists of a
  ring of pairs of flow switches as shown in
  Fig.~\ref{fig:gadgets-shift-ring}.  Its inner and outer face is
  bounded by $L$ $(1,2)$-edges.  There is a natural bijection between
  the $(1,2)$-edges on the inner and on the outer ring, but the shift
  ring allows to exchange the state of two adjacent edges.  We then
  nest sufficiently many shift rings ($L^2$ certainly suffice), which
  allows us to assume that, in the innermost face, the edges whose
  boundary paths have length~2 are consecutive, and the first one (in
  clockwise direction) is at a specific position.  Second, assuming
  that the innermost face has the configuration of its $(1,2)$-edges
  as described above, we simply triangulate it arbitrarily and insert
  into each triangle the construction that makes every face have
  size~$k$.  This concludes the construction of the sink gadget, and
  thus the proof.
\end{proof}

  \begin{theorem}
\label{thm:even-regular-hard}
    $k$-\regdual is NP-complete for all even $k \ge 10$.
  \end{theorem}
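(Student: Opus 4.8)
The plan is to follow the proof of Theorem~\ref{thm:odd-regular-hard} essentially verbatim, reducing again from {\sc Planar Positive 1-in-3-SAT} in which each variable occurs two or three times and each clause has size two or three (Theorem~\ref{thm:pp1in3sat}), adding a sink vertex~$s$ to the variable--clause graph, and modeling the same flow problem on the resulting graph $G_\varphi'$. The only structural change is forced by the fact that, for even $k$, any graph with a $k$-uniform embedding must be bipartite, hence all its cycles --- and in particular the two boundary paths between the poles of every gadget --- must have even length. I would therefore replace the $(1,2)$-, $(1,3)$- and $(1,4)$-edges of the odd case by ``$(1,d)$-edges'' with $d$ \emph{odd}: $(1,3)$-edges to carry the variable--clause flows and $(1,5)$- or $(1,7)$-edges to carry the variable--sink flows, depending on whether the variable occurs two or three times. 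The coupling forced inside a variable gadget (all output edges flip together with the sink edge) still closes up arithmetically, since the differences $d-1$ are now $2,4,6$, and re-counting the plain filler edges needed to bring each internal gadget face up to size exactly~$k$ shows that all these counts are nonnegative precisely when $k\ge 10$; the binding constraints are the variable gadget for a variable occurring three times (one of whose two states contributes $3\cdot 3+1=10$ before any padding) and the crossing gadgets discussed below. This is exactly why the case $k=8$ is left open.

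The main obstacle is to build the bipartite analogue of the subdivided wheel, i.e.\ an ``$(1,d)$-edge'' gadget, $d$ odd, all of whose internal faces have size exactly~$k$, which is bipartite, and whose embedding is unique up to a flip, so that the flip decides which of the two incident outer faces sees a boundary path of length~$1$ and which sees one of length~$d$. A subdivided wheel has triangular faces of odd size $2s+1$ and is useless here; instead I would use a \emph{subdivided prism-like structure}: several concentric even cycles joined by rungs, so that the region between two consecutive cycles is cut into quadrilaterals, with an inner cycle of length~$k$ and every quadrilateral subdivided into a face of size~$k$ (for $k\equiv 2\pmod{4}$ this requires subdividing the rungs and the circumferential edges by different amounts, since uniform subdivision only yields multiples of~$4$). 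Such a gadget is a subdivision of a $3$-connected planar graph, so its embedding is unique up to a flip, it is bipartite because all its faces are even, and the flip swaps the two incident outer faces exactly as for the wheel. The same device replaces the ``triangulate and insert a hub joined by paths of length $(k-1)/2$'' step used to pin down the ambient embedding of $G_\varphi'$: one first \emph{quadrangulates} each relevant face into unit-edge quadrilaterals and then, in each quadrilateral $xyzw$, inserts a new vertex joined to $x,z$ by paths of length $k/2-1$ and to $y,w$ by paths of length $k/2$, which makes all four new faces have size $k$ and rigidifies the quadrilateral.

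The remaining point requiring care is the crossing gadgets, since a $4$-cycle crossing gadget for edges of differences $\delta_1$ and $\delta_2$ has an internal face of size $\delta_1+\delta_2+4$; this is exactly $k$ when $k=10$ for a $(1,3)$-arc crossing a $(1,5)$-sink-arc ($2+4+4=10$), but too large for a $(1,3)$-arc crossing a $(1,7)$-sink-arc. I would resolve this exactly as in Theorem~\ref{thm:odd-regular-hard}: split the $(1,7)$ flow into the sum of a $(1,5)$ part and a $(1,3)$ part on separate edges, cross the $(1,5)$ part with the variable--clause arc by an ordinary crossing gadget (differences $4$ and $2$ are distinct), cross the $(1,3)$ part with the variable--clause arc by a \emph{flow switch} (both differences equal~$2$, so correctness of the switch is not local but is forced globally by a matching re-merging gadget), and then recombine the two parts. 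Everything else carries over: the pipe and clause gadgets only need relabeling of edge types, and the sink gadget is again ``attach a variable-style gadget to each sink arc to split its flow into a cycle of $(1,3)$-edges, then subdivide the interior of that cycle into size-$k$ faces for every possible distribution of $1$'s and $3$'s along its boundary'', using nested \emph{shift rings} built from flow switches followed by a quadrangulate-and-fill step; by Lemma~\ref{lem:outer-face-length} the size of that cycle is fixed, and a routine parity adjustment (e.g.\ a single fixed length-$1$ edge in the sink cycle when needed) keeps it even, so that the whole graph is bipartite. As in the odd case, the constructed graph then admits a $k$-uniform embedding if and only if $\varphi$ has a $1$-in-$3$ satisfying assignment; membership in NP and polynomiality of the reduction are immediate, which completes the proof.
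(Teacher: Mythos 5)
Your proposal follows the paper's proof of Theorem~\ref{thm:even-regular-hard} essentially verbatim: reduce from the same variant of {\sc Planar Positive 1-in-3-SAT}, reuse the flow/gadget framework of Theorem~\ref{thm:odd-regular-hard} with $(1,3)$-, $(1,5)$- and $(1,7)$-edges in place of $(1,2)$-, $(1,3)$- and $(1,4)$-edges, split $(1,7)$ into $(1,3)+(1,5)$ at crossings, use flow switches and shift rings for the sink, and fill ambient faces with bipartite size-$k$ patches. The only substantive divergence is the parity of the sink face: the paper preprocesses the formula to force the number of literals $L$ to be even (by adding a variable and two extra clauses), whereas you propose a vaguer local "parity adjustment" inside the sink cycle; your device is not obviously compatible with the shift-ring construction around that cycle and should be replaced by the paper's cleaner global fix on $L$.
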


\begin{proof}
  The proof runs along the lines of the proof of
  Theorem~\ref{thm:odd-regular-hard}.  For this proof, however, we
  need the additional assumption that number $L$ of literals is even.
  If it is not the case, we take a variable $x$ that occurs only twice
  (subdivide an edge to introduce such a variable as in the proof of
  Theorem~\ref{thm:pp1in3sat} if none exists).  We then create new
  variables $u,v,w$ and add the clause $(x,u,v)$ and twice the clause
  $(u,v,w)$.  If $x$ has the value \emph{true}, then setting $u = v=
  \false$ and $w = \true$ satisfies the new clauses, and if $x =
  \false$, then $u = \true$, $v=w=\false$ satisfies them.  Thus the
  resulting formula is equivalent to the original one, has an even
  number of literals, and satisfies all the conditions of
  Theorem~\ref{thm:pp1in3sat}.

  \begin{figure}[tb]
    \centering
    \includegraphics[page=7]{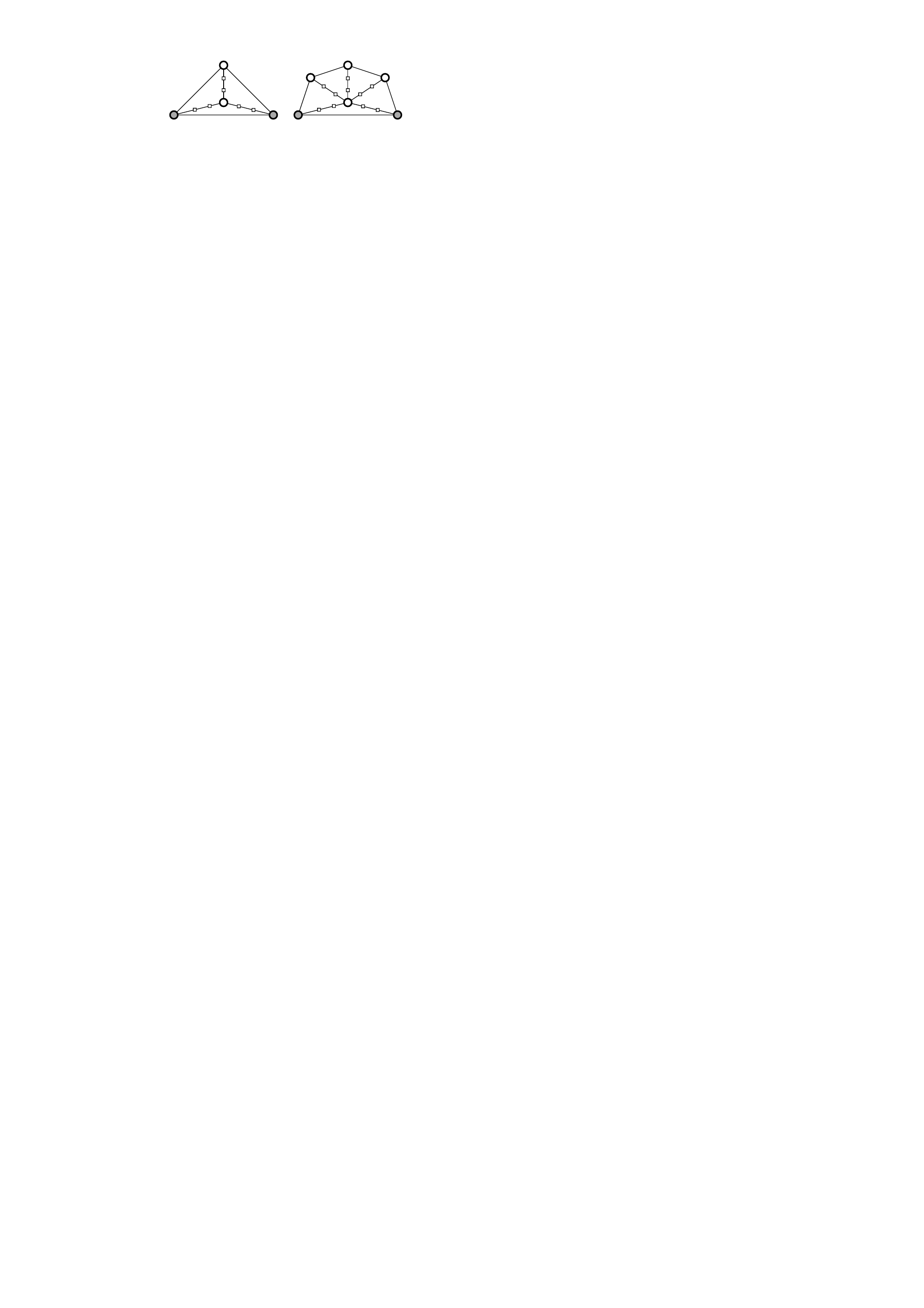}
    \caption{Construction for subdividing a face of even length (bold)
      into faces of arbitrary size $d$ (here $d=8$).}
    \label{fig:even-d-subdivide}
  \end{figure}

  Now the proof essentially reuses the construction from
  Theorem~\ref{thm:odd-regular-hard} for such a formula.  However,
  since all faces have to have even size, it is not possible to
  construct a $(1,2)$-edge (or $(1,k)$-edges with $k$ even for that
  matter); its outer face would have to have odd length while all
  interior faces have even length, which is not possible.  We thus use
  $(1,3)$-edges to transmit information for all the gadgets.  The sink
  edges can then use $(1,5)$- and $(1,7)$-edges.  A crossing gadget
  for a $(1,3)$-edge and a $(1,5)$-edge requires a face of size~10.  A
  $(1,7)$-edge can be split into a $(1,3)$ and a $(1,5)$-edge for the
  corresponding crossing gadget.  By choosing suitably long pipes, we
  can ensure that all faces that are not internal to a gadget have
  even length.  Such a face can then be subdivided into faces of size
  $d$ by adding a new vertex incident to all vertices of the face and
  subdividing every second of these edges $d-3$ times; see
  Fig~\ref{fig:even-d-subdivide}.  For the sink, we first distribute
  the information to $(1,3)$-edges using variable gadgets and then use
  corresponding shift rings made of flow switches for $(1,3)$-edges.
  Now, in the innermost face of the shift ring, there are $L$
  $(1,3)$-edges of which $C$ have length~1 in the inner face and $L-C$
  have length~3, and they can be assumed to be en bloc, starting at a
  specific edge.  The total length of the innermost face then is
  $3(L-C) + C = 3L - 2C$, which is even due to our assumption on $L$.
  Thus, the construction making every face have size $d$ can be done
  as described above.
\end{proof}

\subsubsection*{Open Problems.}  What is the complexity of $k$-\regdual
for $k=5$ and $k=8$?  Are \regdual and \minmaxface polynomial-time
solvable for biconnected series-parallel graphs?  Are they FPT with
respect to treewidth?

\paragraph{Acknowledgments.}  We thank Bartosz Walczak for discussions.
\clearpage
\bibliographystyle{abbrv}
\bibliography{regular_dual}

%
%
%
%
%
%
%
%
%
%
%
%
%
%
%
%
%
%
%
%
%
%
%
%
%
%
%
%
%
%
%
%
%
%
%
%

\end{document}